\documentclass[11pt]{article}

\usepackage{fullpage}

\usepackage{hyperref}
\usepackage{url}
\usepackage{amsfonts} 
\usepackage{xspace}
\usepackage{bm}
\usepackage{algorithm}
\usepackage[noend]{algpseudocode}
\usepackage{amsmath}
\usepackage{amssymb} 
\usepackage{amsthm}
\usepackage{cleveref}
\usepackage{mathtools}
\usepackage{tikz}
\usepackage{wrapfig}
\usepackage{thm-restate}
\usepackage{nicefrac}
\usepackage[style=alphabetic,natbib=true,maxcitenames=2, maxbibnames=10,backend=bibtex]{biblatex}

\hypersetup{
	colorlinks = true,
	linkcolor = black,
	citecolor = black,
	linktocpage = true,
	urlcolor = black
}

\newtheorem{lemma}{Lemma}
\newtheorem{theorem}{Theorem}
\newtheorem{definition}{Definition}
\newtheorem{claim}{Claim}

\newcommand{\Reals}{\mathbb{R}}
\newcommand{\GFT}{\textnormal{GFT}}
\newcommand{\Tau}{\mathcal{T}}
\def\ie{\textit{i.e.}\@\xspace}
\def\eg{\textit{e.g.}\@\xspace}
\def\blindguess{\textsc{Guess}\xspace}
\def\reducearea{\textsc{Reduce}\xspace}
\def\geometricreducearea{\textsc{GeometricReduce}\xspace}
\def\geometricguess{\textsc{GeometricGuess}\xspace}
\def\leaf{\textsc{Leaf}\xspace}
\newcommand{\area}{\textup{\textbf{Region}}}
\newcommand{\child}{\mathcal{C}}
\newcommand{\pred}{\textup{\textbf{pred}}}

\usepackage{color}

\algrenewcommand{\algorithmiccomment}[1]{\textcolor{gray}{\hskip3em$\#$ #1}}

\title{Nonparametric Contextual Online Bilateral Trade}

 \author{
 Emanuele Coccia\thanks{Bocconi University, \texttt{emanuele.coccia@studbocconi.it}}
 \and 
 Martino Bernasconi\thanks{Bocconi University, \texttt{martino.bernasconi@unibocconi.it}. Corresponding author.}
 \and 
 Andrea Celli\thanks{Bocconi University, \texttt{andrea.celli2@unibocconi.it}}
 }

\begin{document}

\maketitle

\begin{abstract}
We study the problem of contextual online bilateral trade. At each round, the learner faces a seller-buyer pair and must propose a trade price without observing their private valuations for the item being sold. The goal of the learner is to post prices to facilitate trades between the two parties. Before posting a price, the learner observes a $d$-dimensional context vector that influences the agent's valuations. Prior work in the contextual setting has focused on linear models. In this we tackle a general nonparametric setting in which the buyer’s and seller’s valuations behave according to arbitrary Lipschitz functions of the context. We design an algorithm that leverages contextual information through a hierarchical tree construction and guarantees regret $\widetilde{O}(T^{{(d-1)}/d})$. 
Remarkably, our algorithm operates under two stringent features of the setting: (1) one-bit feedback, where the learner only observes whether a trade occurred or not, and (2) strong budget balance, where the learner cannot subsidize or profit from the market participants.
We further provide a matching lower bound in the full-feedback setting, demonstrating the tightness of our regret bound.
\end{abstract}

\section{Introduction}

Bilateral trade is a fundamental economic model which describes the interaction between a seller and a buyer, each with private valuations for a good, who seek to engage in trade with the goal of maximizing their individual utilities \citep{Vickrey61,MyersonS83}. This scenario arises in many applications such as ridesharing platforms or energy and financial exchanges.

Recently, \citet{cesa2024bilateral} introduced the online bilateral trade problem, in which at each round $t$, a new seller and buyer arrive with private valuations $s_t$ and $b_t$, respectively. The valuation of the seller $s_t$ is the lowest price they are willing to sell the item. Analogously, the valuation of the buyer, $b_t$, is the highest price the buyer is willing to pay for the item. 
At each round $t$, the learner must propose a price to both the buyer and the seller without observing their private valuations. 
After posting price $p_t$, a trade happens if $s_t\le p_t\le b_t$. The \emph{gain from trade} at time $t$ is
\begin{equation}\label{eq:gft}
\GFT(p_t|s_t,b_t)\coloneqq \mathbb{I}(s_t\le p_t\le b_t)(b_t-s_t),
\end{equation}
and represents the total increase in social welfare generated by the trade. The goal of the learner is to maximize the cumulative gain from trade over time, which is equivalent to minimizing the regret with respect to the best policy in hindsight. 

In many practical scenarios, the learner has access to side information at each round $t$ about the seller, buyer, or the item being traded. For instance, this is common in online marketplaces, where items are highly differentiated and the learner can observe certain features of the item prior to setting a price. Such contextual information can be leveraged to estimate the relevance of different features based on historical data.
\citet{gaucherfeature} introduced the \emph{feature-based} online bilateral trade problem, in which the learner observes a $d$-dimensional feature vector $x_t$ before posting a price at round $t$. They study a linear model where private valuations are linear functions of the form $x_t^\top \theta$ for some unknown parameter vector $\theta$.
In this work, we study whether it is possible to move beyond the linear valuation model. A similar question was examined in the context of pricing in \citet{mao2018contextual}.

We provide a comprehensive characterization of the problem when valuations are \emph{arbitrary} $L$-Lipschitz functions of the context vectors. %
A key feature of our analysis is that it operates under the \emph{one-bit feedback} model: at each round, the learner observes only the signal $\mathbb{I}(s_t \le p_t \le b_t)$, indicating whether a trade occurred or not. This is the most information-constrained setting in online bilateral trade, and existing contextual no-regret algorithms can handle this feedback model only by relaxing the budget balance constraint~\citep{gaucherfeature}. 

\subsection{Technical Challenges and Techniques}

Compared to classic online learning problems, the bilateral-trade setting presents additional technical difficulties, making traditional approaches for nonparametric contextual online learning largely inapplicable.
The most notable is that the reward function, which is the gain from trade (\Cref{eq:gft}), is non-continuous both as a function of $p$ or as a function of the valuations $s$ and $b$.
The second challenge, and possibly a more fundamental one, is the lack of observability of the reward we are trying to maximize; indeed, valuations $s$ and $b$ remain hidden, whether the trade happens or not, and so does the $\GFT$. The only feedback received by the algorithm is the indicator of $\mathbb{I}(s\le p\le b)$, which only signals whether the trade was accepted or not.

Although our problem has features that separate it from classic online learning, we do not need to consider a generic Lipschitz loss function, and we will heavily exploit the structure of the reward $p\mapsto \GFT(p|s,b)$ and the Lipschitzness of the functions generating $s,b$ from the contexts. The specific shape of the $\GFT$ function (and the absence of noise) is what ultimately allows us break the lower bound of $\Omega(T^{(d+2)/(d+3)})$ \citep[from][with $p=1$]{lu2009showing, slivkins2014contextual} that we would obtain if our reward function were a generic Lipschitz function (see \Cref{sec:RW} for more details).

Similar to the \emph{adaptive zooming} technique employed by \citet{slivkins2014contextual} or the \emph{chaining} approach of \citet{cesa2017algorithmic}, we maintain a discretized representation of the context space, which is geometrically refined in regions in which we observe a lot of contexts. Formally, we use a tree with branching factor $2^d$, in which each layer $\ell$ maintains an $O(2^{-\ell})$ discretization of the function mapping contexts to optimal prices. Moreover, at level $\ell$, we ``mark'' a node if we find a price that was accepted when observing a context in the region described by that node. By Lipschitzness, we know that there is a region of size $O(L2^{-\ell})$ that contains a price for all nodes of the sub-tree.

The main challenge then is to effectively utilize this information, as the discontinuity of the reward functions prevents the use of standard techniques. The key trick here is to heavily rely on randomization when posting prices.
By playing uniformly at randomly on an $\epsilon$ discretization of the identified set of good prices (that has length $O(2^{-\ell})$) we have a probability of at least $\Omega(2^\ell\epsilon)$ of winning any trade with $\GFT$ at least $\epsilon$, and thus we will waste only $O({2^{-\ell}}/{\epsilon})$ turns on expectation before finding a price that is accepted, marking the node with that price and continuing to the children node. On the other hand, losing all the trades with $\GFT$ smaller than $\epsilon$ only yields a regret of order $O(\epsilon T)$. By tuning the parameter $\epsilon$, and summing over all node of the tree this leads to an overall regret of $\tilde O(T^{{(d-2)}/{d}})$, which, although being sublinear, is suboptimal.

In order to obtain an optimal rate, the second trick is associated with reducing the regret suffered when guessing the right price and exploits the geometric structure of the problem. Before starting to guess a correct price in the associated interval, we perform a sort of ``hide-and-seek game'' against the adversary generating the contexts. We probe iteratively two prices. If we win the trade, we are happy since we have zero regret. Otherwise, if we miss a trade for both prices, by geometric reasoning, we can limit the space in which the context can place the valuation, and we can show it has to be very close (order of $O(2^{-\ell})$) to the diagonal. This limits the possible $\GFT$ (and thus the regret) by $O(2^{-\ell})$ in all subsequent nodes. This modifies the maximum regret experienced in the guessing phase to be $O({2^{-2\ell}}/{\epsilon})$ at nodes at layer $\ell$, and by appropriately tuning $\epsilon$ we can obtain an optimal regret of $O(T^{{(d-1)}/{d}})$.

Crucially, this algorithm relies on explicit knowledge of the Lipschitz constant $L$, which may be unknown in a practical scenario. Building on the main structure of the vanilla algorithm, in \Cref{sec:noL}, we design an algorithm that works by testing geometrically increasing scales of the Lipschitz constant $L$ and only slightly worsens the guarantees, without requiring knowing the Lipschitz constant.

For the matching (up to logarithmic terms) lower bound of $\Omega(T^{(d-1)/d})$, we use a combination of Yao's weak minimax principle \citep{yao1977probabilistic}, which relates the performance of randomized algorithms in the worst instance to the performance of deterministic algorithms against a distribution over instances, with a simple version of the McShane's extension theorem \citep{mcshane1934extension} that shows the existence of Lipschitz extensions on the continuum of Lipschitz functions on discrete sets.

\subsection{Further Related Works}\label{sec:RW}

\paragraph{Bilateral trade} In the offline setting, \citet{MyersonS83} demonstrated that no mechanism can simultaneously achieve full efficiency, incentive compatibility, individual rationality, and budget balance in general. This impossibility result motivated a long line of research aimed at designing mechanisms that achieve approximate efficiency in Bayesian settings \citep{Mcafee08,BlumrosenM16,brustle2017approximating,DengMSW21,kang22fixed}.

\paragraph{Online bilateral trade} In the online setting, \citet{cesa2024bilateral} study a setting where buyer and seller valuations are i.i.d.~from an unknown distribution, and the learner must enforce strong budget balance (\ie, at each round they post the same price to both the seller and the buyer). They provide no-regret algorithms for the full-feedback setting, and under partial feedback when the valuations are drawn independently for the seller and the buyer, and the underlying distribution is smooth. 
In the case in which valuations are generated by an adversary, \citet{AzarFF22} provides an algorithm achieving a tight sublinear $2$-regret under weak budget balance (\ie, $p_t \le q_t$ at each round $t$). Moreover, \citet{bernasconi2023no} shows that sublinear regret is still achievable if the learner is allowed to enforce global budget balance, meaning the constraint must hold over the entire horizon rather than at each round (see also \citet{chen2025tight, lunghi2025better}).
The most closely related work in online bilateral trade is by \citet{gaucherfeature}, who consider a contextual variant of the problem where valuations at time $t$ take the form $x_t^\top \theta$, possibly with additive noise. They propose an algorithm achieving $\widetilde{O}(T^{2/3})$ regret under two-bit feedback and strong budget balance, and show that sublinear regret is also attainable under one-bit feedback if global budget balance is allowed.
This linear contextual valuation model builds on a substantial body of work in contextual pricing with linear valuations \citep{Kleinberg2003,amin2014repeated,Cohen_feature_DP,lobel2018multidimensional,liu2021optimal}.

\paragraph{Other related models} A related but structurally different model is that of online brokerage \citep{bolic2024online}, which differs from standard bilateral trade in that traders may act as buyers or sellers depending on market conditions, and their valuations are drawn i.i.d. from a shared unknown distribution. \citet{bachoc2024contextual} consider a contextual brokerage model, in which valuations are modeled as zero-mean perturbations of a market price that is linear in the agent's feature vector.

\paragraph{Online nonparametric contextual online learning}
The study of online learning with nonparametric contexts was initiated by \citet{hazan2007online}, in which they proved an upper bound of $\widetilde O(T^{(d+1)/(d+2)})$ and context dimension $d$, while providing a $\Omega(T^{(d-1)/d})$ lower bound. The gap was closed in \citet{rakhlin2015online} for full information by providing an upper bound of $\widetilde O(T^{(d-1)/d})$ and later refined in \citet{cesa2017algorithmic} by providing an explicit algorithm. Both of these works also consider an arbitrary Lipschitz reward function. The model was extended to arbitrary metric action spaces of dimension $p$ in \citet{lu2009showing} and \citet{slivkins2014contextual}, that show an almost matching rate of $\Theta(T^{(d+p+1)/(d+p+2)})$.
In economic settings, the problem of nonparametric contextual information has been studied in \citet{cesa2017algorithmic}, which examines second-price auctions, and \citet{mao2018contextual, chen2021nonparametric, tullii2024improved} for pricing.

\paragraph{Independent and concurrent work} Independently, \citet{Cosson2026better} considers the contextual bilateral trade model with linear contexts of \citet{gaucherfeature}, focusing on the noiseless setting. They study gain from trade and profit maximization (unlike \citet{gaucherfeature} which only study gain from trade) in various feedback settings and budget balance constraints. 
The linear noiseless contextual setting allows for substantially better rates with respect to what is achievable in our work. In particular, the rates they obtain are at most logarithmic in the time horizon and exponential in the dimension, while ours are polynomial with exponent depending on the dimension. %

\section{Preliminaries}

At the beginning of each round $t$, the learner observes a public context $x_t\in[0,1]^d$ generated by an oblivious adversary. At the same time, a new pair of buyer and seller arrives, characterized by private valuations $(b_t,s_t)\in[0,1]^2$. The valuations are generated from the context by two unknown functions $s_t=f_s(x_t)$ and $b_t=f_b(x_t)$. We assume, as it is standard in the nonparametric contextual bandits literature (see, \eg, \citep{slivkins2014contextual,cesa2017algorithmic}), that $f_s,f_b:[0,1]^d\to[0,1]$ are $L$-Lipschitz functions, \ie, $|f_s(x)-f_s(x')|\le L\|x-x'\|_\infty$ for all $x,x'\in[0,1]^d$ (and similarly for $f_b$). We denote the class of all $L$-Lipschitz functions as $\mathcal{F}_{L}$.

\paragraph{Objective} After observing the contextual information $x_t$, the learning picks one price $p_t\in[0,1]$, and the trade is accepted if both the seller and the buyer accept it, \ie, $s_t\le p_t\le b_t$.
When the trade is accepted, it generates a social welfare of $b_t-s_t$ (that is, the increase in total happiness between the two market participants); in particular, the \emph{gain from trade} is defined as in \Cref{eq:gft}.
The objective is to minimize the regret, which given a sequence of contexts $\bm{x}=\{x_1,\ldots,x_T\}\in([0,1]^d)^T$ and two $L$-Lipschitz functions $f_s,f_b$, is defined as
\[
R_T(f_s,f_b,\bm{x})=\sum_{t=1}^T[f_b(x_t)-f_s(x_t)]^+-\mathbb{E}[\GFT(p_t|f_s(x_t),f_b(x_t))],
\]
where the expectation is taken over the randomization of the algorithm.
The regret of an algorithm is then defined as $R_T=\sup_{f_s,f_b\in\mathcal{F}_L,\bm{x}\in([0,1]^d)^T}R_T(f_s,f_b,\bm{x})$, which is the worst-case regret attainable across all functions $f_s,f_b\in \mathcal{F}_L$ and all context sequences $\bm{x}$.
Since we are interested in regret guarantees that are asymptotic in $T$, we will work under the assumption that $T\gg 2^d$. In particular, if $T\le 2^d$, we can only guarantee an upper bound on the regret of $2^d$.

\paragraph{Feedback} Our algorithm operates under partial feedback, specifically under \emph{one-bit feedback}. In this feedback model, the learner observes only a single bit of information indicating whether a trade occurred, \ie, $\mathbb{I}(s_t \le p_t \le b_t)$. In contrast, the lower bound is established under \emph{full feedback}, where the learner observes the valuation pair $(s_t, b_t)$ after posting the price $p_t$, irrespective of whether the trade was executed. An intermediate setting is the \emph{two-bit feedback} model, in which the learner observes both $\mathbb{I}(s_t \le p_t)$ and $\mathbb{I}(p_t \le b_t)$.

\paragraph{Notation}
We denote by $H_d$ the $d$-dimensional Boolean hypercube $\{0,1\}^d$, and by $\bm{0}_d$ a $d$-dimensional vector of zeros.
We denote the projection of $x\in\mathbb{R}^n$ onto $\mathcal{C}\subset \mathbb{R}^n$ as $\Pi_{\mathcal{C}}(x)$.
Moreover, for $a \le b$, let $[a,b]_\epsilon$ denote the uniform grid over the interval $[a,b]$ with step size $\epsilon$, where any grid point outside the interval $[0,1]$ is projected onto the nearest endpoint, \ie, capped to $0$ if below $0$ and to $1$ if above $1$. Formally, 
\(
[a,b]_\epsilon\coloneqq\Pi_{[0,1]}\left(\{a,a+\epsilon, \ldots, a+k\epsilon, b\}\right)
\), where $k=\lfloor\frac{b-a}{\epsilon}\rfloor$.

\section{Regret guarantees for Known Lipschitz Constant}\label{sec:UB}

In this section, we present an algorithm that guarantees regret of order $\widetilde O(LT^{(d-1)/d})$ in the case in which the Lipschitz parameter $L$ is known. In \Cref{sec:noL} we extend this approach to include the case in which $L$ is not known.

\subsection{Hierarchical Tree Structure}

We start by defining a hierarchical partitioning of the $d$-dimensional hypercube. We define a tree $\Tau$ in which each node represents a specific subset of the hypercube. A node $N_{\ell, z}$ in the tree is defined by a level $\ell$ and a reference point $z$. The subset of the hypercube defined by a node $N_{\ell, z}$ is 
\[
\area(N_{\ell, z})\coloneqq[z_1,z_1+2^{-\ell}]\times \ldots\times [z_d,z_d+2^{-\ell}].
\]

The root of the tree corresponds to $\ell=0$ and the reference point $z=\bm{0}_d$. Each node $N_{\ell, z}$ has $2^d$ children $N_{\ell+1,z_i}$, where $z_i=z+{2^{-(\ell+1)}}h_i$ and $h_i$ ranges over all $2^d$ binary vectors.
We denote by $\child(\ell,z)\subset [0,1]^d$ the set of all the reference points corresponding to the children of $N_{\ell, z}$. 
Moreover, for any $z'\in \child(\ell,z)$, we use $\pred(\ell+1,z')$ to denote the reference point of the parent node of $N_{\ell+1,z'}$ (\ie, point $z$). Under this notation, the parent node of $N_{\ell,z}$ is $N_{\ell-1,\pred(\ell,z)}$.

We denote the set of reference points in nodes at level $\ell$ with $Z_\ell$, which is defined recursively as $Z_0=\bm{0}_d$ and $Z_\ell = \cup_{z\in Z_{\ell-1}}\child(\ell,z)$.
Clearly, $\area(N_{\ell,z}) = \bigcup_{z' \in \child(\ell,z)} \area(N_{\ell+1,z'})$. Then, each level $\ell$ of the tree induces a partition of the $d$-dimensional hypercube, since by construction, $\bigcup_{z \in Z_\ell} \area(N_{\ell,z}) = [0,1]^d$. We limit the height of the tree to $H=\lfloor \log_{2^d}(T)\rfloor=O(\log(T)d^{-1})$.

\subsection{Algorithm}\label{sec:algo}

\begin{algorithm}[t]
\caption{}\label{alg:EPBT}
\begin{algorithmic}[1]
\Require Horizon $T$, Lipschitzness $L$, context dimension $d$.
\For{ $t\in [T]$}
\State Observe context $x_t$
\State Initialize $\ell\gets 0,z\gets \bm{0}_d$ %
\While{$p_{\ell,z}\neq \varnothing$ and $N_{\ell,z}$ not leaf}\hspace{-0.5cm}\hfill\Comment{Traverse tree until  active or leaf node is reached}\label{line:traverse}
\State Find $z'$ such that $x_t\in \area(N_{\ell+1,z'})$, where $z'\in\child(\ell,z)$
\State $\ell\gets\ell+1$, $z\gets z'$
\EndWhile
\If{$\reducearea(N_{\ell,z})$ has not terminated}
\State Execute one iteration of the main loop of $\reducearea(N_{\ell,z})$\label{line:reduce}
\ElsIf{$\blindguess(N_{\ell,z})$ has not terminated}
\State Execute one iteration of the main loop of $\blindguess(N_{\ell,z})$\label{line:blind}
\If{$\blindguess(N_{\ell,z})$ terminates returning price $p$}
\State Set node $N_{\ell,z}$ as marked by setting $p_{\ell,z}\gets p$\label{line:mark}
\EndIf
\EndIf
\State \algorithmicelse~Play $p_{\ell,z}$ \hfill\Comment{If node is a marked leaf play corresponding price}\label{line:markedleaf}
\EndFor
\end{algorithmic}
\end{algorithm}

The algorithm works as outlined in \Cref{alg:EPBT}. In particular, each node $N_{\ell,z}$ of the tree may be ``marked with'' a price $p_{\ell,z} \in [0,1]$ (Line \ref{line:mark}).

\begin{definition}\label{def:marked}
A node $N_{\ell,z}$ is said to be \emph{marked} if it has been assigned a \emph{marking price} $p$ such that, for some context $x \in \area(N_{\ell,z})$, the trade is accepted at $p$, that is, $\mathbb{I}(f_s(x) \le p \le f_b(x)) = 1$.
\end{definition}

A node $N_{\ell,z}$ can be marked only if every node along the path from the root to $N_{\ell,z}$ has already been marked.
The set of marked nodes, therefore, forms a rooted subtree. The free (\ie, not marked) leaves of this subtree are referred to as \emph{active nodes}. 
Intuitively, when we reach an active node $N_{\ell,z}$, our goal is to find a price $p$ that results in a successful trade, that is, one satisfying $\mathbb{I}(f_s(x_t) \le p \le f_b(x_t))$. To this end, we leverage the information already encoded in the tree: every node $N_{\ell',z'}$ along the path from the root to $N_{\ell,z}$ is marked with a price $p_{\ell',z'}$ at which a trade occurred. These previously observed prices allow us to restrict the interval of candidate prices and form an estimate for a price $p$ that will be accepted by the valuations $f_s(x_t),f_b(x_t)$. 

\begin{algorithm}[t]
\caption{$\reducearea(N_{\ell,z})$}\label{alg:reducearea}
\begin{algorithmic}[1]
\State Set $\bar p\gets p_{\ell-1,\pred(\ell,z)}$ (\ie, the price associated with the parent node) \hspace{-0.5cm}\hfill\Comment{Initialization}
\State Set $p_L\gets\Pi_{[0,1]}\left(\bar p-L2^{-(\ell-1)}\right)$, $p_U\gets\Pi_{[0,1]}\left(\bar p+L2^{-(\ell-1)}\right)$
\State Initialize $p\gets p_L$ 
\Repeat \hfill\Comment{Main Loop}
\State Post price $p$ and observe $\mathbb{I}(s_t\le p \le b_t)$
\If{$\mathbb{I}(s_t\le p \le b_t)=0$}
\State \algorithmicif~$p = p_L$ \algorithmicthen~set $p \gets p_U$; \algorithmicelse~terminate $p\gets\varnothing$\label{line:pLpU}
\EndIf
\Until{$p=\varnothing$}
\end{algorithmic}
\end{algorithm}

\begin{algorithm}[t]
\caption{$\blindguess(N_{\ell,z})$}\label{alg:guess}
\begin{algorithmic}[1]
\State Let $\bar p=p_{\ell-1,\pred(\ell,z)}$ be the price associated to the parent node $N_{\ell-1,\pred(\ell,z)}$ \hspace{-1.1cm}\hfill\Comment{Initialization}
\State Set $\mathcal{P}_{\ell,z}=[\bar p-L2^{-(\ell-1)}, \bar p+L2^{-(\ell-1)}]_{\epsilon}$ with $\epsilon=LT^{-\nicefrac{1}{d}}$
\Repeat \hfill\Comment{Main Loop}
\State Post price $p\sim\text{Unif}(\mathcal{P}_{\ell,z})$ and observe $\mathbb{I}(s_t\le p \le b_t)$
\Until{$\mathbb{I}(s_t\le p \le b_t)=1$}
\State Return $p$
\end{algorithmic}
\end{algorithm}

\paragraph{Node-wise regret decomposition} In the analysis of the algorithm, we decompose the total regret into the contributions to the regret of each node. Let $A \subseteq [T]$ be a subset of rounds. Then, we define
\[
\mathcal{R}(A)\coloneqq \sum_{t\in A} \left([f_b(x_t)-f_s(x_t)]^+-\GFT_t(p_t|f_s(x_t),f_b(x_t))\right).
\]
For any node $N_{\ell,z} \in \Tau$, we define $T_{\ell,z} \subseteq [T]$ as the set of rounds in which the traversal of the tree stops on $N_{\ell,z}$ (\ie, the number of rounds in which $N_{\ell,z}$ is the node selected at the end of the loop of Line \ref{line:traverse}). Then, for any node $N_{\ell,z}$, we define the regret ``accumulated on $N_{\ell,z}$'' as the regret accumulated in rounds $T_{\ell,z}$: $R_{\ell, z}=\mathbb{E}\left[\mathcal{R}(T_{\ell,z})\right]$.
Then, the overall regret can be rewritten by separating contributions incurred at each level of the tree and at the nodes within that level:
\begin{equation}\label{eq:nodeDecomposition}\textstyle
R_T = \sum_{\ell=0}^{H} \sum_{z\in Z_\ell} R_{\ell, z}.
\end{equation}

Any node of the tree that is never visited by \Cref{alg:EPBT} has $T_{\ell,z} = \emptyset$ and thus does not contribute to the overall regret.

For each observed context $x_t$, we traverse the tree $\Tau$ by repeatedly selecting the child node whose associated region contains $x_t$, until reaching an active node or a leaf node (see Line \ref{line:traverse} of \Cref{alg:EPBT}). 
Then, the algorithm proceeds in one of two ways: (i) if the node is not marked, we invoke one of the two routines \reducearea and \blindguess described below (Lines \ref{line:reduce} and \ref{line:blind}); (ii) if the node is a marked leaf, we post the price previously assigned to it (Line \ref{line:markedleaf}).

\subsection{\reducearea routine}

Consider an active node $N_{\ell,z}$ and its parent $N_{\ell-1,\pred(\ell,z)}$, and focus on the rounds $t\in[T]$ such that $x_t\in \area(N_{\ell,z})$. Whenever \Cref{alg:EPBT} selects such node after the tree traversal, the algorithm executes one iteration of the \reducearea routine on $N_{\ell,z}$ until its termination. The number of rounds in which \Cref{alg:EPBT} calls $\reducearea(N_{\ell,z})$ is denoted by $T_{\ell,z}^\reducearea$.

The \reducearea routine is described in \Cref{alg:reducearea}. Starting from the price $\bar p= p_{\ell-1,\pred(\ell,z)}$ associated to the parent node of $N_{\ell,z}$, the algorithm computes two prices centered around $\bar p$, each with distance $L2^{-(\ell-1)}$. We denote such prices by $p_L=\Pi_{[0,1]}(\bar p-L2^{-(\ell-1)})$ and $p_U=\Pi_{[0,1]}(\bar p+L2^{-(\ell-1)})$, where the projection onto $[0,1]$ ensures that prices are well-defined. Then, \reducearea repeatedly posts the price $p_L$ until a trade is rejected (\ie, $\mathbb{I}(s_t\le p_L \le b_t)=0$). It then switches to posting $p_U$ until that price is also rejected by one of the agents. Once both rejections occur, the routine for node $N_{\ell,z}$ terminates. From that point on, whenever the traversal ends at $N_{\ell,z}$, \Cref{alg:EPBT} invokes the \blindguess subroutine (see \Cref{sec:guess}).

This procedure may appear counterintuitive, as it intentionally seeks rejections of some specific prices. However, the primary objective of \reducearea is to control the regret that may accumulate once the routine concludes. Specifically, upon termination, both prices $p_L$ and $p_U$ must have been rejected once. Then, due to the Lipschitz continuity of the valuation functions, this restricts the range of future contexts for which the node $N_{\ell,z}$ can be selected, effectively ensuring that $f_b(x) - f_s(x) \le O(L 2^{-\ell})$ for any context $x$ in the region corresponding to $N_{\ell,z}$. Formally

\begin{restatable}{lemma}{lemmaafterRA}\label{lem:afterRA}
    Consider a node $N_{\ell,z}$ for which the \reducearea procedure terminated. Then, for any $x\in \area(N_{\ell,z})$, it holds that
    $f_b(x)-f_s(x)\le 6L2^{-\ell}$.
\end{restatable}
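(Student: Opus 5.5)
Since \reducearea terminated, there are two rounds $t_1,t_2\in T^{\reducearea}_{\ell,z}$ with contexts $x_1=x_{t_1}$ and $x_2=x_{t_2}$ in $\area(N_{\ell,z})$. The price $p_L$ was rejected at $x_1$ and the price $p_U$ was rejected at $x_2$. The parent node $N_{\ell-1,\pred(\ell,z)}$ is marked with $\bar p$, so by \Cref{def:marked} there is some $x_0\in\area(N_{\ell-1,\pred(\ell,z)})$ with $f_s(x_0)\le\bar p\le f_b(x_0)$. The parent region has side $2^{-(\ell-1)}$ in $\|\cdot\|_\infty$ and contains $\area(N_{\ell,z})$. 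Hence every $x\in\area(N_{\ell,z})$ satisfies $\|x-x_0\|_\infty\le 2^{-(\ell-1)}$, and by Lipschitzness
\[
f_s(x)\le \bar p+L2^{-(\ell-1)},\qquad f_b(x)\ge \bar p-L2^{-(\ell-1)}.
\]

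I will use this to identify which agent rejected each probe.

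\textbf{Rejection of $p_L$ at $x_1$.} Either $f_s(x_1)>p_L$ or $f_b(x_1)<p_L$. When $p_L=\bar p-L2^{-(\ell-1)}$ (no projection), the displayed bound gives $f_b(x_1)\ge p_L$. When $p_L=0$, $f_b\ge 0$ gives the same. So the seller rejected, and $f_s(x_1)>p_L\ge \bar p-L2^{-(\ell-1)}$.

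\textbf{Rejection of $p_U$ at $x_2$.} Symmetrically, the buyer rejected, and $f_b(x_2)<p_U\le\bar p+L2^{-(\ell-1)}$. In the projected case $p_U=1$, the seller condition $f_s\le 1$ forces the buyer rejection, and $f_b(x_2)<1\le \bar p+L2^{-(\ell-1)}$ still holds.

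Now take any $x\in\area(N_{\ell,z})$. Both $\|x-x_1\|_\infty$ and $\|x-x_2\|_\infty$ are at most $2^{-\ell}$, since all three points lie in a cube of side $2^{-\ell}$. Lipschitzness then gives
\[
f_b(x)\le f_b(x_2)+L2^{-\ell}<\bar p+2L2^{-\ell}+L2^{-\ell},
\]
\[
f_s(x)\ge f_s(x_1)-L2^{-\ell}>\bar p-2L2^{-\ell}-L2^{-\ell}.
\]
Subtracting yields $f_b(x)-f_s(x)\le 6L2^{-\ell}$, which is exactly the claimed bound.

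The main obstacle is the case analysis showing which agent rejected each probe. Without it, a rejection of $p_L$ could come from the buyer, and then nothing useful follows. The projection onto $[0,1]$ must also be handled at both endpoints, which the two cases above do.
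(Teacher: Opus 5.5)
Your proof is correct and follows essentially the paper's argument. The parent's marking price forces $f_s\le p_U$ and $f_b\ge p_L$ on the whole cell, so $p_L$ must have been rejected by the seller and $p_U$ by the buyer. Lipschitzness at scale $2^{-\ell}$ then transfers $f_s(x_1)>p_L$ and $f_b(x_2)<p_U$ to every $x$ in the cell.

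The only cosmetic difference is that you bound $f_b(x)$ and $f_s(x)$ separately against $\bar p$. The paper instead goes through $p_U-p_L+2L2^{-\ell}$ and writes it as an equality with $6L2^{-\ell}$, which is really an inequality once projection occurs. Your version makes the projected-endpoint cases explicit.
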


Although simple, this result is rather surprising, as it reveals a fundamental property of the problem. By viewing the context-generation process as adversarial, the previous result can be interpreted as follows: if the adversary ``hides'' the valuations both when $p_L$ and $p_U$ are played, then the Lipschitz constraint forces the adversary to place them close to the diagonal. In particular, the \reducearea procedure may not end for some nodes (indeed, for most nodes it is not expected to do so). However, we show that if the adversary forces termination at a specific node, then across all similar contexts (\ie, nodes in the subtree rooted at that node) the adversary's gain remains limited.
This situation is illustrated in \Cref{fig:boat3}, which also serves as intuition about the proof.

Now, we show that the regret incurred while running this routine is suitably upper-bounded. Formally, let 
$
R_{\ell,z}^{\reducearea}\coloneqq \mathbb{E}[\mathcal{R}(T_{\ell,z}^\reducearea)]
$ 
be the regret accumulated by the \reducearea routine executed on node $N_{\ell,z}$. We can upper bound this term as follows. 
\begin{restatable}[Regret due to \reducearea]{lemma}{lemmaregretRA}\label{lem:regretRA}
For any node $N_{\ell,z}$ with $\ell\ge 1$ it holds
    $
    R_{\ell,z}^\reducearea\le 24 L2^{-\ell}.
    $
\end{restatable}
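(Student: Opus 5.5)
The plan is to observe that \reducearea only ever posts one of two prices, $p_L$ or $p_U$, and changes price only after a rejection. Hence $T_{\ell,z}^\reducearea$ splits into at most four kinds of rounds: rounds where $p_L$ is accepted, the single round where $p_L$ is rejected, rounds where $p_U$ is accepted, and the single round where $p_U$ is rejected. On every accepted round the posted price lies in $[f_s(x_t),f_b(x_t)]$, so $\GFT_t=[f_b(x_t)-f_s(x_t)]^+$ and the instantaneous regret is exactly zero. The regret of the routine is therefore the sum of the instantaneous regrets of at most two rounds, the two rejection rounds. Since the routine is deterministic, the expectation changes nothing.

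It then suffices to bound the gain from trade available at each rejection round by $12L2^{-\ell}$. This is a local version of the argument behind \Cref{lem:afterRA}. Let $t$ be such a round with $x_t\in\area(N_{\ell,z})$. The parent $N_{\ell-1,\pred(\ell,z)}$ is marked with $\bar p$, so by \Cref{def:marked} some $x'\in\area(N_{\ell-1,\pred(\ell,z)})$ satisfies $f_s(x')\le\bar p\le f_b(x')$. Since $\area(N_{\ell,z})\subseteq\area(N_{\ell-1,\pred(\ell,z)})$, a cube of side $2^{-(\ell-1)}$, we have $\|x_t-x'\|_\infty\le 2^{-(\ell-1)}$. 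Lipschitzness then gives $f_s(x_t)\le\bar p+L2^{-(\ell-1)}$ and $f_b(x_t)\ge\bar p-L2^{-(\ell-1)}$.

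Now take the round where $p_L$ is rejected. Either $f_b(x_t)<p_L$ or $f_s(x_t)>p_L$.
\begin{itemize}
\item If $f_b(x_t)<p_L\le\bar p-L2^{-(\ell-1)}$ (ignoring the clipping for the moment), this contradicts the lower bound on $f_b(x_t)$. When $p_L=0$ this case is impossible anyway, since $f_b\ge0$.
\item Hence $f_s(x_t)>p_L\ge\bar p-L2^{-(\ell-1)}$, while $f_b(x_t)\le f_b(x')+L2^{-(\ell-1)}$.
\end{itemize}
This is not yet enough on its own, because $f_b(x')$ could be large. So I would instead use the crude bound $[f_b-f_s]^+\le 1$ together with a better argument. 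The cleanest route is to use \Cref{lem:afterRA} directly for the $p_U$ rejection round: by the time $p_U$ is rejected the routine has terminated, so $f_b(x_t)-f_s(x_t)\le 6L2^{-\ell}$ for every $x\in\area(N_{\ell,z})$, including that round's context.

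For the $p_L$ rejection round I would run a symmetric, one-sided analysis. Rejection of $p_L$ with $f_s(x_t)>p_L$ means the seller's value lies above the band's lower edge. I would then try to show the trade window $[f_s(x_t),f_b(x_t)]$ has length $O(L2^{-\ell})$, or otherwise charge this round differently. This is the main obstacle: a single rejection of $p_L$ does not by itself force the gap to be small. My fallback is to pair the two rejection rounds via the geometric argument of \Cref{lem:afterRA}. Its proof should show that the rejections of $p_L$ at context $x_a$ and of $p_U$ at context $x_b$, both in a cube of side $2^{-\ell}$, confine $f_b-f_s$ throughout the cube to at most $6L2^{-\ell}$. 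That bound then applies to $x_a$ as well, since $x_a\in\area(N_{\ell,z})$.

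If the routine never terminates, there is at most one rejection round (the $p_L$ one), and its regret is still at most $1$. I would therefore need to handle this case by showing that the gap at that round is bounded via Lipschitzness relative to $x'$ and $\bar p$. Failing that, I would argue that the constant $24=4\cdot6$ leaves enough slack: each of the two rejection rounds is bounded by $12L2^{-\ell}$ by running the \Cref{lem:afterRA}-style argument with the looser distance $2^{-(\ell-1)}$ to the parent context. Summing the two rejection rounds gives $R_{\ell,z}^\reducearea\le 24L2^{-\ell}$.
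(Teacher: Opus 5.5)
Your reduction to the (at most) two rejection rounds is correct, and it is also how the paper starts. Bounding the $p_U$-rejection round by applying \Cref{lem:afterRA} to $N_{\ell,z}$ itself is also valid.

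The gap is the $p_L$-rejection round when \reducearea on $N_{\ell,z}$ never terminates. In that case there is no $p_U$ rejection to pair with, and none of your remaining routes bounds $f_b(x_t)-f_s(x_t)$:
\begin{itemize}
\item Lipschitzness relative to the single marking context $x'$ and $\bar p$ only yields the one-sided bounds $f_s\le\bar p+L2^{-(\ell-1)}$ and $f_b\ge\bar p-L2^{-(\ell-1)}$ on the parent's region. Together with a seller-side rejection of $p_L$, these allow $f_s(x_t)$ just above $p_L$ and $f_b(x_t)$ close to $1$. So this information alone does not make the regret of that round $O(L2^{-\ell})$.
\item Your final fallback (an ``\Cref{lem:afterRA}-style argument with distance $2^{-(\ell-1)}$ to the parent context'', justified by slack in $24=4\cdot 6$) is not an argument. \Cref{lem:afterRA} needs two \emph{rejected} contexts, one at each edge of the price band. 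The accepted marking context cannot play that role.
\end{itemize}

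The missing observation is structural. Since $N_{\ell,z}$ is active, its parent $N_{\ell-1,\pred(\ell,z)}$ is marked. A node is marked only when its \blindguess returns, and \blindguess runs only after that node's own \reducearea has terminated. Hence \Cref{lem:afterRA} applies to the parent at level $\ell-1$, giving $f_b(x)-f_s(x)\le 6L2^{-(\ell-1)}=12L2^{-\ell}$ for every $x\in\area(N_{\ell-1,\pred(\ell,z)})\supseteq\area(N_{\ell,z})$. This uniform bound covers both rejection rounds whether or not the child's routine terminates, and it holds pathwise (so the randomness of $\bar p$ is irrelevant). Summing over at most two rejections gives $R_{\ell,z}^\reducearea\le 24L2^{-\ell}$, which is exactly the paper's proof.
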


\subsection{\blindguess routine}\label{sec:guess}
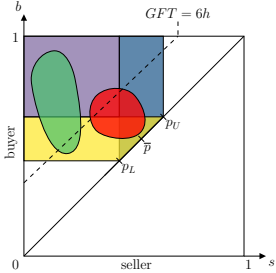
\begin{wrapfigure}[21]{l}{0.36\textwidth}
\vspace{-0.6cm}
  \centering\scalebox{0.37}{\tikzset{every picture/.style={line width=0.75pt}} %
\tikzstyle{every node}=[font=\Large]

\begin{tikzpicture}[x=0.75pt,y=0.75pt,yscale=-1,xscale=1]

\draw  [fill={rgb, 255:red, 49; green, 103; blue, 149 }  ,fill opacity=0.78 ] (380,200) -- (320,260) -- (320,90) -- (380,90) -- (380,171.24) -- cycle ;
\draw  [fill={rgb, 255:red, 254; green, 231; blue, 38 }  ,fill opacity=0.7 ] (380,200) -- (320,260) -- (288.83,260) -- (190,260) -- (190,200) -- cycle ;
\draw  [fill={rgb, 255:red, 71; green, 39; blue, 120 }  ,fill opacity=0.5 ] (190,90) -- (320,90) -- (320,200) -- (190,200) -- cycle ;
\draw   (190,90) -- (490,90) -- (490,390) -- (190,390) -- cycle ;
\draw    (320,260) -- (190,390) ;
\draw [shift={(320,260)}, rotate = 315] [color={rgb, 255:red, 0; green, 0; blue, 0 }  ][line width=0.75]    (0,5.59) -- (0,-5.59)   ;
\draw    (350,230) -- (300,280) ;
\draw [shift={(350,230)}, rotate = 315] [color={rgb, 255:red, 0; green, 0; blue, 0 }  ][line width=0.75]    (0,5.59) -- (0,-5.59)   ;
\draw    (380,200) -- (350,230) ;
\draw [shift={(380,200)}, rotate = 315] [color={rgb, 255:red, 0; green, 0; blue, 0 }  ][line width=0.75]    (0,5.59) -- (0,-5.59)   ;
\draw    (490,90) -- (380,200) ;
\draw  [dash pattern={on 4.5pt off 4.5pt}]  (400,70) -- (400,90) ;
\draw  [fill={rgb, 255:red, 255; green, 2; blue, 0 }  ,fill opacity=0.76 ] (290,170) .. controls (304.82,157.28) and (329.82,159.28) .. (340,170) .. controls (350.18,180.72) and (362.18,204.72) .. (350,220) .. controls (337.82,235.28) and (302.18,229.72) .. (290,220) .. controls (277.82,210.28) and (275.18,182.72) .. (290,170) -- cycle ;

\draw  [fill={rgb, 255:red, 73; green, 193; blue, 109 }  ,fill opacity=0.71 ] (230,110) .. controls (246.82,108.28) and (256.82,132.28) .. (260,150) .. controls (263.18,167.72) and (279.82,245.28) .. (250,250) .. controls (220.18,254.72) and (200.82,169.28) .. (200,150) .. controls (199.18,130.72) and (213.18,111.72) .. (230,110) -- cycle ;
\draw    (490,390) -- (517,390) ;
\draw [shift={(520,390)}, rotate = 180] [fill={rgb, 255:red, 0; green, 0; blue, 0 }  ][line width=0.08]  [draw opacity=0] (8.93,-4.29) -- (0,0) -- (8.93,4.29) -- cycle    ;
\draw    (190,90) -- (190,63) ;
\draw [shift={(190,60)}, rotate = 90] [fill={rgb, 255:red, 0; green, 0; blue, 0 }  ][line width=0.08]  [draw opacity=0] (8.93,-4.29) -- (0,0) -- (8.93,4.29) -- cycle    ;
\draw  [dash pattern={on 4.5pt off 4.5pt}]  (190,290) -- (400,90) ;

\draw (352,233.4) node [anchor=north west][inner sep=0.75pt]    {$\overline{p}$};
\draw (322,263.4) node [anchor=north west][inner sep=0.75pt]    {$p_{L}$};
\draw (382,203.4) node [anchor=north west][inner sep=0.75pt]    {$p_{U}$};
\draw (400,66.6) node [anchor=south] [inner sep=0.75pt]    {$GFT=6h$};
\draw (162.4,259) node [anchor=north west][inner sep=0.75pt]  [rotate=-270]  {buyer};
\draw (321,392.4) node [anchor=north west][inner sep=0.75pt]    {seller};
\draw (522,393.4) node [anchor=north west][inner sep=0.75pt]    {$s$};
\draw (188,56.6) node [anchor=south east] [inner sep=0.75pt]    {$b$};

\draw (185,100) node [anchor=south east] [inner sep=0.75pt]    {$1$};

\draw (185,393.4) node [anchor=north east] [inner sep=0.75pt]    {$0$};

\draw (490,394.) node [anchor=north west] [inner sep=0.75pt]    {$1$};

\end{tikzpicture}}
\vspace{-0.2cm}\caption{The $x$-axis (resp., $y$-axis) represents the seller's (resp., buyer's) valuations.
The blue (resp., yellow) region denotes valuations for which $p_L$ (resp., $p_U$) is rejected. \Cref{lem:afterRA} states that if there are valuations both in the yellow and blue one, then there cannot be any with $\GFT \ge 6h=6L2^{-\ell}$.
Let the green and red regions be the image of $\area(N_{\ell,z})$ under $(f_s,f_b)$, then \Cref{lem:afterRA} forbids the red region, but allows the green one.
  }
  \label{fig:boat3}
\end{wrapfigure}
For an active node $N_{\ell,z}$, once the \reducearea procedure has terminated, the algorithm proceeds with the \blindguess routine (see \Cref{alg:guess}). In this phase, prices are drawn uniformly at random from the set $\mathcal{P}_{\ell,z} = [\bar p - L2^{-(\ell-1)}, \bar p + L2^{-(\ell-1)}]_{\epsilon}$, where $\bar p = p_{\ell-1,\pred(\ell,z)}$ is the price used to mark the parent node,  and $\epsilon$ will be specified later. 
In the \blindguess procedure, random prices are posted until a trade at some price $p$ is accepted, at which point $N_{\ell,z}$ is marked with $p_{\ell,z} = p$.

The set $\mathcal{P}_{\ell,z}$ is chosen so that, by the Lipschitz condition, it always contains a feasible price. Specifically, for every $x \in \area(N_{\ell,z})$ with $f_b(x) \ge f_s(x)$, there exists a price in the interval $[\bar p - L2^{-(\ell-1)}, \bar p + L2^{-(\ell-1)}]_\epsilon$ that will be accepted by the valuations $f_s(x)$ and $f_b(x)$.
Moreover, we observe that $|\mathcal{P}_\epsilon|=O(\epsilon^{-1}L2^{-\ell})$. This allows us to bound the expected number of rounds required to terminate the \blindguess procedure, and to upper bound the regret incurred by executing it. 

Formally, we denote by $T_{\ell,z}^\blindguess\subseteq [T]$ the set of rounds in which \Cref{alg:EPBT} executes $\blindguess(N_{\ell,z})$. Then, let $R_{\ell,z}^\blindguess\coloneqq \mathbb{E}\big[\mathcal{R}(T_{\ell,z}^\blindguess)\big]$ be the regret accumulated when \blindguess is executed for note $N_{\ell,z}$. We provide the following upper bound on this regret term.

\begin{restatable}[Regret due to \blindguess]{lemma}{lemmaregretBG}\label{lem:regretBG}
    Let $x\in \area(N_{\ell,z})$. For any $\epsilon>0$, consider posting prices uniformly at random from $\mathcal{P}_{\ell,z}=[\bar p-L2^{-(\ell-1)}, \bar p+L2^{-(\ell-1)}]_{\epsilon}$, where $\bar p=p_{\ell-1,\pred(\ell,z)}$ is the marking prices of the parent node. Then the regret incurred by the \blindguess procedure is at most
    \[
    R_{z,\ell}^{\blindguess}\le \epsilon \mathbb{E}[|T_{\ell,z}|]+24\frac{L^22^{-2\ell}}{\epsilon}.
    \]
\end{restatable}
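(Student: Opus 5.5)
Our plan is to split the rounds in $T_{\ell,z}^\blindguess$ by the size of the gain from trade at the current context. Fix a round $t\in T_{\ell,z}^\blindguess$ and write $g_t=[f_b(x_t)-f_s(x_t)]^+$. Since the regret in a round is at most $g_t$, rounds with $g_t<\epsilon$ (call them \emph{small} rounds) contribute at most $\epsilon$ each. Their total contribution is therefore at most $\epsilon\,\mathbb{E}[|T_{\ell,z}|]$, because $T_{\ell,z}^\blindguess\subseteq T_{\ell,z}$. It remains to control the \emph{large} rounds, where $g_t\ge\epsilon$.

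Rounds in $T_{\ell,z}^\blindguess$ occur only after \reducearea has terminated on $N_{\ell,z}$. By \Cref{lem:afterRA}, every such round therefore has $g_t\le 6L2^{-\ell}$, so each large round costs at most $6L2^{-\ell}$ of regret. We next show that, in any large round, the uniform draw from $\mathcal{P}_{\ell,z}$ is accepted with probability at least $\epsilon/(|\mathcal{P}_{\ell,z}|\epsilon)\approx 1/|\mathcal{P}_{\ell,z}|$.

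To see this, note that the parent node is marked with $\bar p$, so $f_s(x')\le\bar p\le f_b(x')$ for some $x'\in\area(N_{\ell-1,\pred(\ell,z)})$. Since $\|x_t-x'\|_\infty\le 2^{-(\ell-1)}$, Lipschitzness gives $f_s(x_t)\le\bar p+L2^{-(\ell-1)}$ and $f_b(x_t)\ge\bar p-L2^{-(\ell-1)}$. The acceptance interval $[f_s(x_t),f_b(x_t)]$ has length at least $\epsilon$ and intersects $[\bar p-L2^{-(\ell-1)},\bar p+L2^{-(\ell-1)}]$. We must check that this intersection has length at least $\epsilon$, or otherwise contains an endpoint; the endpoints are always included in the grid, together with the projection to $[0,1]$. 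Either way the interval contains at least one grid point of $\mathcal{P}_{\ell,z}$. Since $|\mathcal{P}_{\ell,z}|\le 2L2^{-(\ell-1)}/\epsilon+2=4L2^{-\ell}/\epsilon+2$, the success probability in a large round is at least $q\coloneqq 1/|\mathcal{P}_{\ell,z}|$.

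Because \blindguess terminates at the first accepted trade, the number of large rounds before termination is dominated by a geometric random variable with parameter $q$. This holds even though the contexts are adversarial (oblivious), since the draws are independent of the contexts. Hence the expected number of large rounds is at most $1/q\le 4L2^{-\ell}/\epsilon+2$. Multiplying by the per-round cost $6L2^{-\ell}$ gives $24L^22^{-2\ell}/\epsilon$ plus a lower-order term $12L2^{-\ell}$. That term can be absorbed into the constant by assuming $\epsilon\le L2^{-\ell}$, i.e. $\ell\le H$ with $\epsilon=LT^{-1/d}$, or handled by a tighter grid count. The main obstacle is the grid-hitting step above, namely making sure projection and boundary effects never leave a length-$\epsilon$ acceptance interval without a grid point. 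We will handle it by a short case analysis on whether $[f_s(x_t),f_b(x_t)]$ contains $\bar p\pm L2^{-(\ell-1)}$ or lies strictly inside.
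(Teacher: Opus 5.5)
Your proposal is correct and follows the same route as the paper. You split the \blindguess rounds by whether the gain from trade exceeds $\epsilon$, charge small rounds $\epsilon$ each, charge large rounds $6L2^{-\ell}$ each via \Cref{lem:afterRA}, and bound the expected number of large rounds by $|\mathcal{P}_{\ell,z}|$ with a geometric-trials argument.

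You are actually more careful than the paper on two points. First, the paper asserts $[f_s(x),f_b(x)]\subseteq[\bar p-L2^{-(\ell-1)},\bar p+L2^{-(\ell-1)}]$, but the two one-sided Lipschitz bounds only give a nonempty intersection, so your endpoint case analysis is what the argument really needs. Second, the paper uses $|\mathcal{P}_{\ell,z}|\le 4L2^{-\ell}/\epsilon$ and drops the $+2$ endpoint term. That term is what produces your extra $12L2^{-\ell}$, so your constant $36$ (valid when $\epsilon\le L2^{-\ell}$, which holds for $\epsilon=LT^{-1/d}$ and $\ell\le H$) versus the stated $24$ comes from the paper's undercount, not from a flaw in your argument, and does not matter for \Cref{th:UB}.
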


The intuition about this result is simple: if the trades have small (\ie, less than $\epsilon$) $\GFT$ then the total regret is at most $\epsilon$ per round. On the other hand, if the $\GFT$ is large (\ie, greater then $\epsilon$) then an $\epsilon$-grid over the possible prices is enough to win a trade with probability $\Omega(\epsilon2^{\ell}/L)$, which happens on average after $O(L2^{-\ell}/\epsilon)$ turns. In each of these turns we are only loosing $O(L2^{-\ell})$ because they take place after the \reducearea procedure of that node.

\subsection{Regret analysis}

By using the results of the previous sections, and for an appropriate choice of $\epsilon$, we can now upper bound the overall regret incurred by the algorithm.

\begin{restatable}{theorem}{theoremregret}\label{th:UB}
    For any $T>2^d$ and any pair of $L$-Lipschitz valuation functions, \Cref{alg:EPBT} with $\epsilon=LT^{-\nicefrac{1}{d}}$ guarantees
 $R_T=O(L \log_2(T)T^{(d-1)/d})$.
\end{restatable}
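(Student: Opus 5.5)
We start from the node-wise decomposition \eqref{eq:nodeDecomposition} and split each $R_{\ell,z}$ according to the three kinds of rounds that can end the traversal at $N_{\ell,z}$: rounds executing $\reducearea(N_{\ell,z})$, rounds executing $\blindguess(N_{\ell,z})$, and (only for leaves, $\ell=H$) rounds in which a marked leaf plays $p_{H,z}$. An internal node that is marked is never the endpoint of a traversal, so it contributes nothing afterwards. Hence
\[
R_T\le \sum_{\ell=0}^{H}\sum_{z\in Z_\ell}\big(R^{\reducearea}_{\ell,z}+R^{\blindguess}_{\ell,z}\big)+\sum_{z\in Z_H}R^{\leaf}_{H,z},
\]
where the root ($\ell=0$) must be handled separately, since it has no parent price. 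There we simply charge one round per unit: we post, say, a uniform grid price (or any price) until the first acceptance. Alternatively, we treat the root with $\bar p=1/2$ and an interval covering $[0,1]$. Either way its contribution is at most $O(L/\epsilon)$ by the same argument as \Cref{lem:regretBG}, with $2^{-\ell}$ replaced by a constant.

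For the \reducearea terms we use \Cref{lem:regretRA}. Since $|Z_\ell|=2^{d\ell}$, this gives $\sum_{\ell\le H}2^{d\ell}\cdot 24L2^{-\ell}=O(L\,2^{(d-1)H})$. Because $2^{dH}\le T$, this is $O(LT^{(d-1)/d})$ for $d\ge2$; for $d=1$ it is $O(LH)=O(L\log T)$.

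For the \blindguess terms we use \Cref{lem:regretBG}. Summing $\epsilon\,\mathbb{E}|T_{\ell,z}|$ over all nodes gives at most $\epsilon T$, because the sets $T_{\ell,z}$ partition $[T]$. The second part gives $\sum_\ell 2^{d\ell}\cdot 24L^2 2^{-2\ell}/\epsilon=O(L^2 2^{(d-2)H}/\epsilon)$, or $O(L^2H/\epsilon)$ when $d=2$ (with the sum dominated by $\ell=0$ when $d=1$). With $\epsilon=LT^{-1/d}$ and $2^{H}\le T^{1/d}$, we get $\epsilon T=LT^{(d-1)/d}$ and $L^2T^{(d-2)/d}/\epsilon=LT^{(d-1)/d}$. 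The extra $\log T$ factor absorbs the $d\in\{1,2\}$ cases and the per-level summation.

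For the marked leaves, let $x_t\in\area(N_{H,z})$ and let the marking context be $x'$ in the same region. Then $s'\le p_{H,z}\le b'$, and Lipschitzness gives $f_s(x_t)\le p_{H,z}+L2^{-H}$ and $f_b(x_t)\ge p_{H,z}-L2^{-H}$. If the trade fails, either $f_s(x_t)>p$ or $f_b(x_t)<p$, and in both cases $f_b(x_t)-f_s(x_t)\le f_b(x_t)-f_b(x')+f_s(x')-f_s(x_t)\le 2L2^{-H}$. So each leaf round costs $O(L2^{-H})=O(L\,2^{d}T^{-1/d})$, using $2^{H}\ge T^{1/d}/2$. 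Summed over at most $T$ rounds, this is $O(L2^dT^{(d-1)/d})$; the $2^d$ factor is absorbed under $T\gg2^d$ (or one can refine by taking $H=\lceil\log_{2^d}T\rceil$).

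The main obstacle is bookkeeping rather than a conceptual step. It consists of the root node, the precise rounding of $H$, and checking that the geometric sums over levels are bounded by the last level, or by $H$ copies of it, which produces the $\log_2 T$ factor. Everything else follows directly from \Cref{lem:regretRA,lem:regretBG}.
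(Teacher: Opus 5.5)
The overall structure is right: you use the same three-way decomposition and the same treatment of the \reducearea and \blindguess terms as the paper. You are also right that the root has no parent price, which the paper glosses over. There is, however, a gap in the marked-leaf step.

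Your inequality $f_b(x_t)-f_s(x_t)\le f_b(x_t)-f_b(x')+f_s(x')-f_s(x_t)$ is equivalent to $f_b(x')\le f_s(x')$. That is the opposite of what acceptance at $x'$ gives, so the step fails whenever $x'$ had positive gain from trade. The conclusion also cannot be recovered from the marking price alone. An accepted price $p$ only locates a point inside $[f_s(x'),f_b(x')]$; it says nothing about the width of that interval.

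For example, the facts you use are consistent with the following:
\begin{itemize}
\item $f_s(x')=p=0.1$ and $f_b(x')=0.9$;
\item $x_t$ in the same leaf with $f_s(x_t)$ slightly above $0.1$ and $f_b(x_t)=0.9$.
\end{itemize}
Then the seller rejects $p$ at $x_t$ and about $0.8$ of gain from trade is lost in every such round.

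The missing ingredient is the one the paper uses. A leaf can only be marked by \blindguess, and \blindguess starts only after $\reducearea(N_{H,z})$ has terminated. \Cref{lem:afterRA} then gives $f_b(x)-f_s(x)\le 6L2^{-H}$ for every context in the leaf's region. This bounds the regret of every \leaf round, whether or not the trade occurs.

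There are two smaller corrections.

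First, from $2^{H}\ge T^{1/d}/2$ you get $2^{-H}\le 2T^{-1/d}$. So the per-round leaf cost is $O(LT^{-1/d})$ with no $2^d$ factor. A genuine $2^d$ factor would not be ``absorbed'' by $T\gg 2^d$ in any case.

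Second, your claim that the $\log T$ factor absorbs the case $d=1$ is false. For $d=1$, the $\ell=0$ term of the \blindguess sum is of order $L^2/\epsilon=LT$ when $\epsilon=LT^{-1}$, while the target bound is $O(L\log T)$. The paper's step $\sum_{\ell}2^{(d-2)\ell}\le H2^{(d-2)H}$ carries the same implicit restriction to $d\ge 2$.
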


The proof, provided in the appendix, begins by partitioning the rounds $[T]$ across nodes according to \Cref{eq:nodeDecomposition}. For each node, the rounds are then further subdivided based on the phase of the algorithm. Intuitively, for each round, we look at the node ``from which we post a price''. Then, such rounds are partitioned based on whether the algorithm invoked \reducearea, \blindguess, or if the node is a marked leaf, as specified in Line \ref{line:markedleaf} of \Cref{alg:EPBT}. Each resulting term can be suitably upper-bounded by applying the previous lemmas.

\section{Unknown Lipschitz Constant: a Multi-Scale Approach}\label{sec:noL}

The algorithm presented in \Cref{sec:algo}, heavily relies on the knowledge of the Lipschitz constant $L$. The exact knowledge of this constant is crucial in proving the correctness of the algorithm. More precisely, we exploited several Lipschitzness arguments to bound valuations. However, in many real-world scenarios it may be unrealistic to assume that $L$ is known. 
In this section, we present an algorithm that extends the one introduced in \Cref{sec:algo}, removing the need for a priori knowledge of $L$. This comes at the cost of only a slight deterioration in guarantees, yielding a regret of $\widetilde O(L^2 T^{(d-1)/d})$ instead of $\widetilde O(L T^{(d-1)/d})$.

We rely on geometric (\ie, multi-scale) extensions of the \reducearea and \blindguess subroutines, which we call \geometricreducearea and \geometricguess, respectively. In particular, for each node $N_{\ell,z}$, we define a sequence of geometrically increasing Lipschitz constants $\tilde L^j_{\ell,z}=O(2^j)$. In \geometricreducearea we simply perform the \reducearea routines for each scale $j$. Crucially, terminating this procedure bounds the $\GFT$ in all subsequent nodes by $O(L2^{-\ell})$ (similarly to \Cref{lem:afterRA}), but without requiring the knowledge of $L$.

The construction of \geometricreducearea is more involved, and The reason is clear if we think about the analysis of \blindguess. 
The analysis relied on a simple fact: for all trades with $\GFT$ greater than $\epsilon$, we have a probability of at least $\Omega(\epsilon2^{\ell}/L)$ of marking the node, and thus we expect to finish in $O(L2^{-\ell}/\epsilon)$ rounds.
However, it is also possible that the process never terminates, since there may be many trades happening with $\GFT$ less than $\epsilon$. 
Therefore, without knowledge of the constant $L$, the algorithm may spend many rounds working at an excessively small scale, potentially leading to the loss of trades with large $\GFT$. Moreover, it becomes impossible to distinguish whether we are losing these trades because they have small $\GFT$, or simply because the chosen scale is too small for the real Lipschitz constant.
The solution is to carefully increase the scale of our Lipschitz constant (more precisely, logarithmically), and to draw a price from the corresponding grid uniformly at random. This process is guaranteed to finish in expectation after $O(L2^{-\ell}/\epsilon)$ rounds against valuations with $\GFT$ at least $\epsilon$, and does not require knowing $L$ beforehand. 
The overall algorithm is detailed in \Cref{app:algo2} and provides the following regret guarantees, which are only a $O(\log_2(T)L)$ factor greater than the guarantees of \Cref{th:UB}.

\begin{restatable}{theorem}{theoremWL}
    Consider the nonparametric contextual bilateral trade problem, with one-bit feedback and strong budget balance with $L$-Lipschitz valuation functions. Then there is an algorithm that, without taking as input $L$, guarantees a regret of
    $
        R_T=O(\log_2(T)^2L^2 T^{(d-1)/d}).
    $
\end{restatable}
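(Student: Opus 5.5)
The plan is to design the multi-scale algorithm explicitly and then rerun the node-wise decomposition \eqref{eq:nodeDecomposition} from the proof of \Cref{th:UB}, paying an extra factor at each node for the unknown scale. For every node $N_{\ell,z}$ I would use scales $\tilde L^j=2^j$ for $j=0,1,\ldots,J$ with $J=\lceil\log_2 T\rceil$. A larger true constant makes the bound trivial, since the regret is at most $T\le LT^{(d-1)/d}$ once $L\ge T^{1/d}$.

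\emph{\geometricreducearea.} For $j=0,1,\ldots$ in order, run \reducearea at scale $\tilde L^j$: post $p_L^j=\Pi_{[0,1]}(\bar p-\tilde L^j2^{-(\ell-1)})$ until it is rejected, then $p_U^j$ until it is rejected. Consider the first scale $j^\star$ with $2^{j^\star}\ge L$.
\begin{itemize}
\item Once scale $j^\star$ has terminated, the argument of \Cref{lem:afterRA} applies verbatim, since it only needs $\tilde L^{j^\star}\ge L$. Every $x\in\area(N_{\ell,z})$ then satisfies $f_b(x)-f_s(x)\le 6\tilde L^{j^\star}2^{-\ell}\le 12L2^{-\ell}$.
\item The regret of scale $j$ comes only from the two rejected rounds, because accepted rounds cost nothing. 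It is therefore at most twice the largest $\GFT$ in the node, as in \Cref{lem:regretRA}.
\item For $j\le j^\star$, each scale costs $O(L2^{-\ell})$, using the parent bound from \Cref{lem:afterRA} applied one level up. Summing gives $O(\log_2(T)L2^{-\ell})$.
\item Scales beyond $j^\star$ need not be run: after $j^\star$ terminates, the further loss per round is already $O(L2^{-\ell})$. The algorithm cannot detect $j^\star$, but it can simply run all $J$ scales, each costing $O(L2^{-\ell})$ once $j^\star$ is done.
\end{itemize}

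\emph{\geometricguess.} Each round, draw $j$ uniformly from $\{0,\ldots,J\}$ and post a uniform price from $[\bar p-2^j2^{-(\ell-1)},\bar p+2^j2^{-(\ell-1)}]_\epsilon$. The grid index $j^\star$ is drawn with probability $1/(J+1)$. Whenever the current valuation has $\GFT\ge\epsilon$, that grid contains at least $\epsilon/\epsilon=\Omega(1)$ accepted points among $O(L2^{-\ell}/\epsilon)$ points. The per-round marking probability is therefore $\Omega(\epsilon 2^{\ell}/(L\log_2 T))$.

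Repeating the proof of \Cref{lem:regretBG}, rounds with $\GFT<\epsilon$ cost at most $\epsilon$ each. The expected number of rounds with $\GFT\ge\epsilon$ is $O(\log_2(T)L2^{-\ell}/\epsilon)$, each costing $O(L2^{-\ell})$ after \geometricreducearea. This gives $R^{\geometricguess}_{\ell,z}\le\epsilon\mathbb E|T_{\ell,z}|+O(\log_2(T)L^22^{-2\ell}/\epsilon)$.

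There is one subtlety. The marking price only guarantees acceptance at some context of the node, and the children's reference interval must be justified using the true $L$ rather than the chosen scale. This holds because the multi-scale grid always includes scales at least $L$.

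\emph{Summing over the tree.} Level $\ell$ has at most $2^{d\ell}$ nodes, and the contributions are as follows.
\begin{itemize}
\item Reduce phases: $\sum_\ell 2^{d\ell}\log_2(T)L2^{-\ell}=O(\log_2(T)L\,2^{(d-1)H})=O(\log_2(T)LT^{(d-1)/d})$.
\item Guess phases: $\epsilon T+\sum_\ell 2^{d\ell}\log_2(T)L^22^{-2\ell}/\epsilon$. With $\epsilon=T^{-1/d}$, which is independent of $L$, this is $O(T^{(d-1)/d}+\log_2(T)L^2T^{(d-2)/d}T^{1/d})$.
\item Marked leaves: these are at depth $H$, so as in \Cref{th:UB} they lose $O(L2^{-H})$ per round, i.e.\ $O(LT^{(d-1)/d})$ in total.
\end{itemize}
The extra $\log_2(T)$ factor in the final bound absorbs the looseness from union-bounding over scales and from per-level sums.

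\emph{Main obstacle.} The hardest step is \geometricguess. The marking probability must be uniform across nodes without knowing $L$. Wrong-scale prices that get accepted may mark a node with a price far from good prices for other contexts in the region. I would handle this by showing that any accepted price at a context of the node still lies within $L2^{-\ell}$ of feasible prices for all contexts of the node, by Lipschitzness. This holds regardless of which grid produced the price, so correctness of the descendant intervals needs only that the largest scale exceeds $L$.
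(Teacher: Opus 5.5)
Your proposal is correct, and your \geometricreducearea and the final summation over the tree match the paper's. The paper runs only $\bar j+1=\ell+1$ scales per node, since larger scales already cover $[0,1]$. You run $\lceil\log_2 T\rceil+1$ scales and dismiss $L\ge T^{1/d}$ as trivial. Either way the reduce cost is $O(\log_2(T)\,L2^{-\ell})$ per node.

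The genuine difference is in \geometricguess. The paper uses a deterministic doubling schedule: at internal time $\tau$ it samples from the largest grid of size at most $\tau$. It then bounds the time to marking by $4\tau^\star$, with $\tau^\star=O(L2^{-\ell}/\epsilon)$, through a hitting-time argument over blocks $[2^i\tau^\star,2^{i+1}\tau^\star)$. This is designed to avoid paying for the number of scales in the guess term.

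You instead redraw the scale uniformly in every round. Each round with $\GFT\ge\epsilon$ then marks the node with probability $\Omega\bigl(1/(\log_2(T)(L2^{-\ell}/\epsilon+1))\bigr)$ regardless of the past. The geometric-trials argument of \Cref{lem:regretBG} applies verbatim, at the price of one $\log_2 T$ factor that the stated $\log_2(T)^2$ absorbs.

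What your memoryless choice buys is robustness. In the paper's scheme $\tau$ also advances on rounds with $\GFT<\epsilon$, for instance $f_b(x_t)<f_s(x_t)$, which cost nothing but can never stop the routine. So the grid can grow well beyond size $\tau^\star$ before the large-$\GFT$ rounds arrive. The Markov-chain claim gives every transition the stated stopping probability, so it does not directly account for such interleaving. In your scheme this issue never arises.

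Two small bookkeeping points:
\begin{itemize}
\item The second logarithm in your guess term really comes from $\sum_{\ell\le H}2^{(d-2)\ell}\le (H+1)2^{(d-2)H}$, which is only needed for $d=2$.
\item Your smallest scale is $1$, so your constants should read $\max\{L,1\}$ in place of $L$. The paper's choice $L_0=1/2$ likewise implicitly assumes $L$ is bounded below by a constant.
\end{itemize}
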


\section{Lower Bound}\label{sec:LB}

In this section, we proceed to prove a matching lower bound that proves the tightness of our algorithm. In particular, we will prove the following theorem.

\begin{theorem}\label{th:LB}
    Any randomized algorithm under strong-budget balance, $d$-dimensional context vectors, and $L$-Lipschitz value functions, suffers a regret of $\Omega(LT^{\frac{d-1}{d}})$, provided that $T>(4L)^d$. This also holds under full feedback.
\end{theorem}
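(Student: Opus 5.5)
We will use Yao's minimax principle: it suffices to exhibit a distribution over instances $(f_s,f_b,\bm{x})$ under which every deterministic algorithm, even with full feedback, has expected regret $\Omega(LT^{(d-1)/d})$. Set $K=\lfloor T^{1/d}\rfloor$ and partition $[0,1]^d$ into $K^d\approx T$ cubes of side $1/K$ with centers $c_1,\dots,c_{K^d}$. The context sequence is fixed and deterministic: each center is visited exactly once, padding with repeats if $T>K^d$, which only changes constants. Since every context is new, past feedback reveals nothing about the current valuations. The randomness of the instance is therefore the only source of uncertainty.

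\textbf{Construction on the grid.} Let $h=cL/K$ for a small constant $c$, say $c=1/4$; the assumption $T>(4L)^d$ guarantees $h\le 1/4$, so all values below stay in $[0,1]$. For each center $c_i$, draw an independent fair sign $\sigma_i\in\{\pm1\}$. If $\sigma_i=+1$, set $(f_s(c_i),f_b(c_i))=(1/2-h,\,1/2)$; if $\sigma_i=-1$, set $(1/2,\,1/2+h)$. In both cases the gain from trade is $h$. Only $p=1/2$ is accepted in both cases, so we perturb the intervals so that they are disjoint and each has length $\Theta(h)$: take $(1/2-2h,\,1/2-h/2)$ versus $(1/2+h/2,\,1/2+2h)$. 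Any single price then trades in at most one of the two cases. Since $\|c_i-c_j\|_\infty\ge 1/K$ for $i\ne j$ and the values differ by at most $4h=4cL/K\le L/K$, both $f_s$ and $f_b$ are $L$-Lipschitz on the finite set of centers.

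\textbf{Extension and counting.} A simple form of McShane's theorem, $f(x)=\min_i\{f(c_i)+L\|x-c_i\|_\infty\}$ clipped to $[0,1]$, extends each function to an $L$-Lipschitz function on $[0,1]^d$. Clipping preserves Lipschitzness. At round $t$, the price $p_t$ is a deterministic function of the past contexts and valuations, which are independent of $\sigma_{i_t}$. Hence, with probability at least $1/2$, the price misses the realized interval, and the expected loss per round is at least $(3h/2)/2$. Summing over $T$ rounds gives regret $\Omega(Th)=\Omega(LT/K)=\Omega(LT^{(d-1)/d})$. Yao's principle then transfers this bound to every randomized algorithm.

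\textbf{Main obstacle.} The delicate step is the construction. The two candidate intervals must be disjoint so that no price hedges between them, while the separation stays below the Lipschitz budget $L/K$ between neighboring centers. We also need the gain from trade to remain of order $h$ and all values to stay in $[0,1]$. Choosing the constant $c$ and invoking $T>(4L)^d$ handles all three requirements simultaneously.
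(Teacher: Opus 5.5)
Your proposal is correct and is essentially the paper's argument: Yao's principle, about $T$ distinct grid contexts at spacing $\Theta(T^{-1/d})$, an independent fair choice at each context between two valuation pairs whose accepted-price intervals are disjoint and of width $\Theta(LT^{-1/d})$, a McShane extension, and the fact that the fresh bit at each round is independent of all past (even full) feedback. Your constants check out (value gap $5h/2\le L/K$), whereas the paper's Lipschitz check bounds the gap between $s_0,s_1$ (and between $b_0,b_1$) by $\gamma-\epsilon$ when it is really $\gamma+\epsilon$; also, taking $K=\lceil T^{1/d}\rceil$ and using $T$ distinct centers would remove the padding and the resulting $2^{-(d-1)}$-type loss in the constant.
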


From a technical perspective, the proof of \Cref{th:LB} relies on Yao's weak minimax principle \citep{yao1977probabilistic}, which lower bounds the regret with the expected performance of a deterministic algorithm against a distribution of instances. Formally:

\begin{lemma}[Yao's weak minimax principle]
For any deterministic algorithm generating $\{p_t\}_{t\in[T]}$, and any distribution $\mathcal{D}$, we have that
$R_T\ge \mathbb{E}_{(f_{s},f_b)\sim\mathcal{D}}\left[\sum_{t\in [T]}[f_b(x_t)-f_s(x_t)]^+-\GFT_t(p_t)\right]$.
\end{lemma}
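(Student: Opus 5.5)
The plan is the standard averaging argument. I read the statement as follows: the worst-case regret $R_T$ of any randomized algorithm is at least the expected regret that some deterministic algorithm (as used in the statement) incurs against instances $(f_s,f_b)\sim\mathcal{D}$. Throughout, the context sequence $\bm{x}$ is fixed, and $\mathcal{D}$ is supported on $\mathcal{F}_L\times\mathcal{F}_L$.

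First, I would represent a randomized algorithm as a probability distribution over deterministic algorithms. All internal randomness is drawn up front as a seed $\omega\sim\mu$, say an infinite sequence of independent uniform variables. For each fixed $\omega$, the algorithm becomes a deterministic map $A_\omega$ from the observed history to a price $p_t\in[0,1]$. This holds under one-bit, two-bit or full feedback, since in every case the history is a deterministic function of $(f_s,f_b,\bm{x})$ and the previous prices. The expectation in the definition of $R_T(f_s,f_b,\bm{x})$ is then exactly $\mathbb{E}_{\omega\sim\mu}$ of the regret of $A_\omega$ on the instance, which I denote $\mathrm{Reg}(A_\omega;f_s,f_b)$.

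Second, I would bound the supremum by an average. Because $\mathcal{D}$ is supported on admissible instances,
\[
R_T\ \ge\ \sup_{f_s,f_b\in\mathcal{F}_L} \mathbb{E}_{\omega}\big[\mathrm{Reg}(A_\omega;f_s,f_b)\big]\ \ge\ \mathbb{E}_{(f_s,f_b)\sim\mathcal{D}}\,\mathbb{E}_{\omega}\big[\mathrm{Reg}(A_\omega;f_s,f_b)\big].
\]
Each per-round term lies in $[-1,1]$, so the regret is bounded by $T$ in absolute value. Fubini's theorem therefore lets me exchange the two expectations, giving $\mathbb{E}_\omega\,\mathbb{E}_{\mathcal{D}}[\mathrm{Reg}(A_\omega;f_s,f_b)]$.

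Third, an average over $\omega$ is at least the infimum over $\omega$. Hence there exists a deterministic algorithm $A^\star=A_{\omega^\star}$, or a near-minimizer if the infimum is not attained, such that
\[
R_T\ \ge\ \mathbb{E}_{\mathcal{D}}\Big[\sum_{t\in[T]}[f_b(x_t)-f_s(x_t)]^+-\GFT_t(p_t)\Big],
\]
where $p_t$ are the prices chosen by $A^\star$. This is exactly the form used in the lower bound: one shows that every deterministic algorithm suffers large expected regret under $\mathcal{D}$, and in particular $A^\star$ does.

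The only delicate step is measurability, needed to justify Fubini. I would handle it by requiring that the price map $(\omega,\text{history})\mapsto p_t$ be measurable, and that $\mathcal{D}$ be a finitely supported distribution. The planned construction in the lower bound is finitely supported: finitely many sign patterns on a grid, extended to Lipschitz functions via McShane's extension theorem. With finite support, the exchange of expectations reduces to a finite sum and needs no measure theory.
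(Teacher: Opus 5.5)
Your proposal is correct. The paper gives no proof of this lemma and simply invokes Yao's principle via \citet{yao1977probabilistic}. Your argument is exactly the standard proof behind that citation: write the randomized algorithm as a mixture of deterministic $A_\omega$, bound the supremum over admissible instances by the $\mathcal{D}$-average, swap the two expectations, and pick a seed $\omega^\star$ whose value is at most the average.

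Your change of quantifier is also the right call. Read literally, ``any deterministic algorithm'' makes the inequality false for the $R_T$ of a randomized algorithm. The lower bound only needs $R_T\ge\inf_{A}\mathbb{E}_{\mathcal{D}}[\mathrm{Reg}(A;f_s,f_b)]$, with the infimum over deterministic $A$. Finally, the near-minimizer caveat is unnecessary: an integrable function of $\omega$ always takes a value no larger than its mean at some $\omega$.
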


To construct a hard distribution over instances, and consequently over valuation functions $f_s$ and $f_b$, we start by considering all $2^T$ possible assignments of a discrete set of $T$ context vectors to two disjoint valuations (\ie, such that no price results in both trades being accepted). We then extend these discrete assignments to Lipschitz-continuous functions over the entire context space.
The distribution is then taken as the uniform one over all $2^T$ possible valuation functions.
In particular, we build a distribution $\mathcal{D}$ over instances such that, for any deterministic algorithm, it holds 
\[
\textstyle
\mathbb{E}_{(f_{s},f_b)\sim\mathcal{D}}\Big[\sum_{t\in [T]}[f_b(x_t)-f_s(x_t)]^+-\GFT_t(p_t)\Big]\ge \widetilde\Omega(LT^{\frac{d-1}d}).
\]

We define the standard $\delta$-grid $\mathcal{G}$ on the $[0,1]^d$ dimensional hypercube, \ie, $\mathcal{G}=[0,\delta,\ldots, 1]^d$. The set $\mathcal{G}$ is the set of possible contexts and, by fixing $\delta={T^{-1/d}}$, we have that $|\mathcal{G}|=T$. Fix any ordering of contexts $x_1,\ldots,x_T$ over $\mathcal{G}$. We can define two valuations 
\(
(s_0,b_0) = \left(\frac{1}{2}-\gamma,\frac{1}{2}-\epsilon\right)
\quad\text{and}\quad
(s_1,b_1) = \left(\frac{1}{2}+\epsilon,\frac{1}{2}+\gamma\right),
\)
where $\gamma>\epsilon>0$ are two parameters to be specified later. Notice that there is no price $p$ such that $p$ wins both trades, in particular when the valuations are $(s_0,b_0)$ the only prices that get the trade accepted are in the set $[\frac{1}{2}-\gamma,\frac12-\epsilon]$, while under valuations $(s_1,b_1)$ prices that make the trade happen are in $[\frac12+\epsilon,\frac12+\gamma]$. For the same reason, we know that we can focus on deterministic algorithms that post prices only within one of these two intervals, since they yield $\GFT$ greater than that of any other price.

Next, for any $L>0$, we build a family of $L$-Lipschitz functions from $[0,1]^d$ to $\{(s_0,b_0),(s_1,b_1)\}$.
In order to build our set of instances, consider a binary vector $h\in\mathcal{H}_T$ of length $T$. Given a context $x_i$, the $i$-th bit of $h$ determines the valuations for such a context. In particular, $h$ selects $(s_0,b_0)$ if $h_i=0$, and $(s_1,b_1)$ if $h_i=1$.
For each of these binary vectors $h\in\mathcal{H}_T$, let $f^{h}=(f_s^h,f_b^h):\mathcal{G}\to \Reals^2$ be the functions satisfying the corresponding assignment, \ie,
$f^h_s(x_i)=s_{h_i}\quad\text{and}\quad f^h_b(x_i)=b_{h_i}$.
Note that for all $x\neq x'\in \mathcal{G}$ and any $h\in\mathcal{H}$, we have  $|f^h_s(x)-f^h_s(x')|\le \gamma-\epsilon$, and the same holds for $f_b^h$. By taking $\epsilon=\frac{\gamma}{2}$ and $\gamma=2LT^{-1/d}$, and by noting that $\inf_{x\neq x', x\in \mathcal{G}}\|x-x'\|_\infty\ge \delta=T^{-1/d}$ we have that $f^h_s$ and $f^h_b$ are $L$-Lipschitz in $\mathcal{G}$.\footnote{Note that we need $T>(4L)^d$ for the valuations to be well defined (\ie, in $[0,1]^2$).} Now, we need to extend the functions over the entire context domain $[0,1]^d$, which can be done in a black box way by the McShane extension theorem that we recall here
\begin{theorem}[McShane's extension \citep{mcshane1934extension}]\label{th:extension}
    Let $X\subset \Reals^d$ be a closed set, and $f$ be a $\Reals$-valued, $L$-Lipschitz function on $X$. Then there exists an extension $\tilde f:\Reals^d\to \Reals$ that agrees with $f$ on $X$ and is $L$-Lipschitz on $\Reals^d$.\footnote{In our proof we only care about the existence of such an extension, however, the proof of the theorem is constructive and defines $\tilde f$ as $\tilde f(x)=\inf_{x'\in X}(f(x)+L\|x-x'\|_\infty)$, \cite[][see \eg, Theorem 1.33]{weaver2018lipschitz}.}
\end{theorem}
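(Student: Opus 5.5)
We prove McShane's extension directly with the explicit formula from the footnote: define $\tilde f(x)\coloneqq\inf_{x'\in X}\bigl(f(x')+L\|x-x'\|_\infty\bigr)$ for $x\in\Reals^d$. The proof has three steps: $\tilde f$ is finite, it agrees with $f$ on $X$, and it is $L$-Lipschitz. Closedness of $X$ plays no role beyond $X\neq\emptyset$, and any norm works in place of $\|\cdot\|_\infty$.

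\emph{Finiteness.} Fix some $x_0\in X$. For every $x'\in X$, the Lipschitz property on $X$ gives $f(x')\ge f(x_0)-L\|x'-x_0\|_\infty$. Therefore
\[
f(x')+L\|x-x'\|_\infty\ge f(x_0)-L\|x'-x_0\|_\infty+L\|x-x'\|_\infty\ge f(x_0)-L\|x-x_0\|_\infty,
\]
by the triangle inequality. So the infimum is bounded below. It is also at most $f(x_0)+L\|x-x_0\|_\infty<\infty$, hence $\tilde f(x)\in\Reals$.

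\emph{Agreement on $X$.} Let $x\in X$. Choosing $x'=x$ gives $\tilde f(x)\le f(x)$. For any $x'\in X$, Lipschitzness gives $f(x')+L\|x-x'\|_\infty\ge f(x)$, so $\tilde f(x)\ge f(x)$. Hence $\tilde f=f$ on $X$.

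\emph{Lipschitzness.} Fix $x,y\in\Reals^d$. For every $x'\in X$, the triangle inequality gives $f(x')+L\|x-x'\|_\infty\le f(x')+L\|y-x'\|_\infty+L\|x-y\|_\infty$. Taking the infimum over $x'$ on both sides yields $\tilde f(x)\le\tilde f(y)+L\|x-y\|_\infty$. Exchanging the roles of $x$ and $y$ gives $|\tilde f(x)-\tilde f(y)|\le L\|x-y\|_\infty$.

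The only delicate point is finiteness of the infimum, which is exactly where the lower bound from the first step is needed. For the application in the lower-bound proof, we apply the theorem separately to $f^h_s$ and $f^h_b$ on the finite, hence closed, set $\mathcal{G}$. If values in $[0,1]$ are required, we compose with $\Pi_{[0,1]}$, which is $1$-Lipschitz, so the result stays $L$-Lipschitz and still agrees with $f$ on $\mathcal{G}$, because $f$ already takes values in $[0,1]$ there.
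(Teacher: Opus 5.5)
Your proof is correct and is exactly the constructive McShane argument the paper points to in its footnote (the paper cites Weaver, Theorem 1.33, rather than proving it); note that you correctly put $f(x')$ inside the infimum, whereas the footnote's formula has the typo $f(x)$. Your side remarks are also valid: only $X\neq\emptyset$ is actually used, not closedness, and clamping with $\Pi_{[0,1]}$ preserves both $L$-Lipschitzness and agreement on $\mathcal{G}$.
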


We apply this on each $f^h$ and take $\mathcal{D}$ as the uniform distribution over $\mathcal{F}=\{\tilde f^{h}\}_{h\in \mathcal{H}^T}$. %
This distribution has the intuitive property that renders its history unhelpful in predicting future valuations.
\begin{restatable}{lemma}{lemmauniform}\label{lem:uniform}
    For any $t\in[T]$, the distribution of $(s_t,b_t)$ under $\mathcal{D}$, conditioned on the filtration up to time $t-1$, is uniform over $\{(s_0,b_0),(s_1,b_1)\}$.
\end{restatable}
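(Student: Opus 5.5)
The plan is to use the product structure of $\mathcal{D}$ together with the fact that each context in $\mathcal{G}$ is visited exactly once. Since $\mathcal{D}$ is uniform over $\mathcal{F}=\{\tilde f^h\}_{h\in\mathcal{H}_T}$, drawing $(f_s,f_b)\sim\mathcal{D}$ is the same as drawing $h$ uniformly from $\{0,1\}^T$. Uniformity of $h$ on $\{0,1\}^T$ means its bits $h_1,\ldots,h_T$ are i.i.d.\ Bernoulli$(1/2)$. One point needs checking: the map $h\mapsto\tilde f^h$ must be injective, so that uniformity over $\mathcal{F}$ agrees with uniformity over $h$. This holds because $\tilde f^h$ agrees with $f^h$ on $\mathcal{G}$, and $(s_0,b_0)\neq(s_1,b_1)$. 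Even without injectivity, it suffices to define $\mathcal{D}$ as the pushforward of the uniform measure on $h$.

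Next I identify the round-$t$ valuations. The contexts $x_1,\ldots,x_T$ are a fixed ordering of the $T$ distinct points of $\mathcal{G}$, so
\[
(s_t,b_t)=(\tilde f_s^h(x_t),\tilde f_b^h(x_t))=(s_{h_t},b_{h_t}),
\]
which depends only on the bit $h_t$.

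Then I describe the filtration up to time $t-1$. The algorithm is deterministic, so the history up to $t-1$ (contexts, posted prices $p_1,\ldots,p_{t-1}$, and feedback, even full feedback $(s_\tau,b_\tau)_{\tau<t}$) is a measurable function of $(h_1,\ldots,h_{t-1})$. The contexts are fixed, $p_\tau$ is a deterministic function of past feedback, and the feedback at $\tau$ is a function of $h_\tau$ and $p_\tau$. Hence $\mathcal{F}_{t-1}\subseteq\sigma(h_1,\ldots,h_{t-1})$.

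Independence of $h_t$ from $(h_1,\ldots,h_{t-1})$ now gives
\[
\Pr\bigl(h_t=i\mid\mathcal{F}_{t-1}\bigr)=\Pr(h_t=i)=\tfrac12, \qquad i\in\{0,1\}.
\]
Therefore $(s_t,b_t)$ is, conditionally on $\mathcal{F}_{t-1}$, uniform over $\{(s_0,b_0),(s_1,b_1)\}$.

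No step is genuinely hard. The only point needing care is the third: the history must reveal nothing about $h_t$. This is exactly where distinctness of the contexts is used, since no grid point repeats and so no past observation involves $\tilde f^h(x_t)$. The argument also covers full feedback, because the feedback reveals only $h_\tau$ for $\tau<t$.
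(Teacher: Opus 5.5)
Your proof is correct and is essentially the paper's argument: the paper observes that among the functions $\tilde f^h$ consistent with the revealed valuations, exactly half have $h_t=0$ and half have $h_t=1$. You state the same fact as independence of the i.i.d.\ uniform bit $h_t$ from $\sigma(h_1,\ldots,h_{t-1})$. Your write-up is somewhat more careful: it checks that $h\mapsto\tilde f^h$ is injective, makes explicit that the whole history (including the deterministic algorithm's prices) is a function of $h_1,\ldots,h_{t-1}$, and avoids the paper's slip of counting $2^{t-1}$ consistent functions instead of $2^{T-t+1}$.
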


We start by observing that the $\GFT$ that can be extracted from both $(s_0,b_0)$ and $(s_1,b_1)$ is $\gamma-\epsilon=\gamma/2>0$. Therefore, the benchmark will gain $\mathbb{E}_{(f_s,f_b)\sim\mathcal{D}}\big[\sum_{t\in[T]}[f_b(x)-f_s(x)]^+\big]=\frac\gamma2T$.
On the other hand, by \Cref{lem:uniform}, for any $t\in[T]$ the best expected $\GFT$ achievable by any deterministic algorithm is bounded from above by $\frac{\gamma-\epsilon}{2}=\frac\gamma4$.
Since we set $\gamma=2LT^{-1/d}$, we obtain that
\[
R_T\ge \frac{T}{2}(\gamma-\epsilon)=\frac{L}{2} T^{\frac{d-1}d},
\]
 concluding the proof of \Cref{th:LB}.

 \paragraph{Remark} This construction is reminiscent of that of \citet{lu2009showing}, who establish a rate of $\Omega(T^{(d+1)/(d+2)})$ (instantiated, as an example, in the finite MAB setting), which is higher than both our lower bound of $\Omega(T^{(d-1)/d})$ and our upper bound of $\widetilde O(T^{(d-1)/d})$, despite our setting appearing more challenging. However, this is not unexpected, since the proof of \citet{lu2009showing} leverages a significantly sparser grid in the context space ($O(T^{-1/(d+2)})$ instead of $O(T^{-1/d})$). They are allowed to do this since in their setting contexts can be repeated without incurring zero regret, as noise allows suboptimal actions to remain indistinguishable for longer. In contrast, in our setting, the reward function is identical whenever the same context is observed.

\section{Conclusions and Future Work}

There are many interesting possible directions for future works. Similar to \citet{slivkins2014contextual}, one can consider a general context space and derive rates that depend on its packing or covering dimension. Alternatively, following \citet{kuzborskij2020locally, cesa2017algorithmic}, one may exploit additional structure in the function class mapping contexts to valuations. From a bilateral trade perspective, other notions of budget balance are also relevant, such as {weak} and {global} budget balance \citep{cesa2024bilateral, bernasconi2023no}. Under these notions, the learner may post different prices to the buyer and the seller, subject only to the constraint that the market is not subsidized in each round and across all rounds, respectively. Moreover, our tight rates are attainable in part because our model does not allow for noise, unlike some general nonparametric bandit models \citep{lu2009showing, slivkins2014contextual}. The question of what rates can be achieved under the noisy valuations remains open. 
Finally, \citet{Cosson2026better} also studies profit maximization. It would be interesting to study the profit maximization setting  with general Lipschitz contexts.

\section*{Acknowledgment}
The work of MB, and AC was partially funded by the European Union. Views and opinions expressed
are however those of the author(s) only and do not necessarily reflect those of the European Union or the
European Research Council Executive Agency. Neither the European Union nor the granting authority can
be held responsible for them.

This work is supported by an ERC grant (Project 101165466 — PLA-STEER).

\clearpage

\printbibliography

\clearpage
\appendix
\section{Omitted proofs from \Cref{sec:UB} (Upper Bound)}
\lemmaafterRA*

\begin{proof}
    Let $p_L=\Pi_{[0,1]}(\bar p-L2^{-(\ell-1)})$ and $p_U=\Pi_{[0,1]}(\bar p+L2^{-(\ell-1)})$, where $\bar p= p_{\ell-1,\pred(\ell,z)}$ is the price with which we marked the parent node.

    If \reducearea terminated for node $N_{\ell,z}$, then Line \ref{line:pLpU} of \Cref{alg:reducearea} has been executed exactly twice. We denote by $t_L\in [T]$ the first time in which Line \ref{line:pLpU} is executed (\ie, the round in which we observe a trade rejected for $p_L$), and analogously we denote by  $t_U\in [T]$ the second time in which Line \ref{line:pLpU} is executed (\ie, the round in which we observe a trade rejected at price $p_U$).
    Let $x_L=x_{t_L}$ and $x_U=x_{t_U}$.
    
    At the end of \reducearea we have two pairs $(p_L,x_L)$ and $(p_U,x_U)$ such that 
    \[
    f_s(x_L)>p_L \text{ or } f_b(x_L)<p_L,
    \]
    and 
    \[
    f_s(x_U)>p_U \text{ or } f_b(x_U)<p_U.
    \]
    Moreover, by the Lipschitz condition and the fact that the parent node $N_{\ell-1,\pred(\ell,z)}$ is marked (see \Cref{def:marked}), it follows that for every $x\in \area(N_{\ell,z})\subseteq \area(N_{\ell-1,\pred(\ell,z)})$ we have 
    \[
    f_s(x)\le \min\{1,\bar p+L2^{-(\ell-1)}\}=p_U\quad \text{ and }\quad f_b(x)\ge \max\{0,\bar p-L2^{-(\ell-1)}\}2=p_L.
    \]
    
    Then, by rewriting the above for $x_U\in \area(N_{\ell,z})$, we get that the buyer refused the trade at price $p_U$, since the seller would have accepted it. Hence, $f_b(x_U)<p_U$. Similarly, we know that it was the seller who rejected the trade at $x_L$, implying $f_s(x_L)>p_L$.
    Now consider any $x\in \area(N_{\ell,z})$. By Lipschitzness we have that
    $$|f_s(x)-f_s(x_L)|\le L2^{-\ell}\quad\text{ and }\quad |f_b(x)-f_b(x_U)|\le L2^{-\ell},$$ from which we can conclude that
    \[
    f_b(x)-f_s(x)\le f_b(x_U)-f_s(x_L)+2L2^{-\ell}\le p_U-p_L+2L2^{-\ell}=6L2^{-\ell}.
    \]
    This concludes the proof.
\end{proof}

\lemmaregretRA*

\begin{proof}
    First, observe that in rounds belonging to $T_{\ell,z}^\reducearea$ we accumulate regret only twice, once for each rejection of prices $p_L$ and $p_U$, since accepted prices yield zero regret.
    Moreover, by construction we have that $N_{\ell,z}$ is active, so we know that the parent node $N_{\ell-1, \pred(\ell,z)}$ has been marked and that \reducearea has been completed for $N_{\ell-1, \pred(\ell,z)}$. For any $x\in \area(N_{\ell,z})\subseteq \area(N_{\ell-1,\pred(\ell,z)})$ we can apply \Cref{lem:afterRA}, and we get that $f_b(x)-f_s(x)\le 6L2^{-(\ell-1)}=12L2^{-\ell}$.
    Therefore, we obtain $R_{\ell,z}^\reducearea\le 24 L2^{-\ell}$, which concludes the proof.
\end{proof}

\lemmaregretBG*
\begin{proof}
    For each $x\in \area(N_{\ell,z})$ we have that
    \[
    f_s(x)\le f_s(\bar x)+L2^{-(\ell-1)}\le \bar p+L2^{-(\ell-1)},
    \]
    where $\bar x\in \area(N_{\ell-1,\pred(\ell,z)})$ is the context in the parent node's area under which $\bar p$ was accepted. 
    Similarly, we have that
    \[
    f_b(x)\ge f_b(\bar x)-L2^{-(\ell-1)}\ge\bar p -L2^{-(\ell-1)}. 
    \]
    Thus, if $f_b(x)\ge f_s(x)$ then the interval $[f_s(x),f_b(x)]$ is contained in the interval $[\bar p -L2^{-(\ell-1)}, \bar p +L2^{-(\ell-1)}]$.

    For any $t\in T_{\ell,z}^\blindguess$ we are going to consider two cases: (i) $f_b(x_t)-f_s(x_t)\le \epsilon$, and (ii) $f_b(x_t)-f_s(x_t)> \epsilon$. We let 
    \[
T_{\ell,z}^{\blindguess,1}\coloneqq \left\{t\in T_{\ell,z}^\blindguess: f_b(x_t)-f_s(x_t)\le \epsilon\right\},
    \]
    and $T_{\ell,z}^{\blindguess,2}\coloneqq T_{\ell,z}^{\blindguess}\setminus T_{\ell,z}^{\blindguess,1}$.
    Then, by linearity of expectation, the regret $R_{z,\ell}^{\blindguess}$ can be decomposed as
    \begin{align*}
    R_{z,\ell}^{\blindguess}%
    =\mathbb{E}\left[\mathcal{R}\left(T_{\ell,z}^{\blindguess,1}\right)\right]+\mathbb{E}\left[\mathcal{R}\left( T_{\ell,z}^{\blindguess,2}\right)\right].
    \end{align*}

First, we observe that if $t \in T_{\ell,z}^{\blindguess,1}$, then the first term in the decomposition can be bounded above by $\epsilon \, \mathbb{E}[|T_{\ell,z}|]$. Indeed, in this case we have $[f_b(x_t) - f_s(x_t)]^+ \le \epsilon$,  $\GFT_t(\cdot \mid f_s(x_t), f_b(x_t)) \ge 0$, and moreover $T_{\ell,z}^{\blindguess,1} \subseteq T_{\ell,z}$.

Similarly, the second term can be upper bounded by $\mathbb{E}[|T_{\ell,z}^{\blindguess,2}|]\cdot G$, where $G$ is the maximum $\GFT$ attainable. Since the \reducearea procedure must have already terminated for $N_{\ell,z}$ before \blindguess is executed, we can apply \Cref{lem:afterRA}, which yields $G \le 6L2^{-\ell}$. Therefore,
    \[
    R_{z,\ell}^{\blindguess}\le \epsilon \mathbb{E}[|T_{\ell,z}|]+ 6L2^{-\ell}\cdot \mathbb{E}[|T_{\ell,z}^{\blindguess,2}|].
    \]

    In order to conclude the proof we have to upper bound $\mathbb{E}[|T_{\ell,z}^{\blindguess,2}|]$.
    Notice that for every $t \in T_{\ell,z}^{\blindguess,2}$ we have $f_b(x_t) - f_s(x_t) > \epsilon$. Hence, there exists at least one price $p$ in the discretized grid $\mathcal{P}_{\ell,z}$ such that $f_s(x_t) \le p \le f_b(x_t)$. This implies that the probability of terminating \blindguess at each round $t \in T_{\ell,z}^\blindguess$ is at least $|\mathcal{P}_{\ell,z}|^{-1}$, and therefore
\[
\mathbb{E}[|T_{\ell,z}^{\blindguess,2}|] \le |\mathcal{P}_{\ell,z}| \le \tfrac{4L2^{-\ell}}{\epsilon}.
\]
This concludes the proof.
\end{proof}

\theoremregret*
\begin{proof}

    For a node $N_{\ell,z}$, let $T^\leaf_{\ell,z}\subseteq [T]$ be the set of rounds in which Line \ref{line:markedleaf} of \Cref{alg:EPBT} is executed for that node (\ie, the number of times a leaf node is played after being marked). Moreover, let 
    \[
    R^{\leaf}_{\ell,z}\coloneqq \mathbb{E}\left[\mathcal{R}(T^\leaf_{\ell,z})\right].
    \]

First, following \Cref{eq:nodeDecomposition}, we decompose the overall regret into the contributions incurred at each level and at the nodes within that level:
\begin{align*}
    R_T&=\sum_{\ell=0}^{H}\sum_{z\in Z_\ell} R_{\ell,z}=\sum_{\ell=0}^{H}\sum_{z\in Z_\ell} (R^\reducearea_{\ell,z}+R_{\ell,z}^\blindguess+R_{\ell,z}^\leaf)
\end{align*}
where the second equality come from the fact that $T_{\ell,z}^\reducearea,T_{\ell,z}^\blindguess,T_{\ell,z}^\leaf$ form a partition of $T_{\ell,z}$ for each level $\ell\in\{0,1,\ldots,H\}$ and node $z\in Z_\ell$.

\paragraph{\reducearea} Now we are going to analyze the first term.
By \Cref{lem:regretRA} we have that $R_{\ell,z}^\reducearea\le 24 \cdot L2^{-\ell}$. Moreover, we recall that $H=\lfloor \log_{2^d}T\rfloor$ and  $|Z_\ell|=2^{d\ell}$. Thus
\begin{align*}
\sum_{\ell=0}^{H}\sum_{z\in Z_\ell} R^\reducearea_{\ell,z}&\le 24\cdot L \sum_{\ell=0}^{\lfloor\log_{2^d}(T)\rfloor} 2^{d\ell-\ell}\\
&\le \frac{24 L \log T}{d} 2^{dH}\cdot 2^{-H}\\
&\le \frac{48 L \log T}{d} T^{1-\frac1d},
\end{align*}
where in the last inequality we used that that $\log_{2^d}(T)= \frac{1}{d}\log(T)$ and that $\ell\mapsto2^{d\ell-\ell}$ is monotone so that %
$\sum_{\ell=0}^{H}2^{d\ell-\ell}\le H 2^{d{H}-H}$.

\paragraph{\blindguess} Similarly, for the second term, we are going to proceed as follows:

\begin{align*}
    \sum_{\ell=0}^{H}\sum_{z\in Z_\ell} R_{\ell,z}^\blindguess &\le \sum_{\ell=0}^{H}\sum_{z\in Z_\ell} \left(\epsilon \mathbb{E}[|T_{\ell,z}|]+24\frac{L^2 2^{-2\ell}}{\epsilon}\right)\tag{\Cref{lem:regretBG}}\\
    &=\epsilon T+\frac{24L^2}{\epsilon}\sum_{\ell=0}^{H} 2^{d\ell-2\ell}\tag{$|Z_\ell|=2^{d\ell}$}\\
    &\le \epsilon T+\frac{24L^2\log T}{\epsilon d} 2^{d\log_{2^d}(T)}\cdot 2^{-2\log_{2^d}(T)}\\
    &\le \epsilon T+\frac{24L^2\log T}{\epsilon d} T^{1-\frac2d}.
\end{align*}

Finally, by choosing $\epsilon=L T^{-\frac{1}{d}}$
we obtain 
\[
\sum_{\ell=0}^{H}\sum_{z\in Z_\ell} R_{\ell,z}^\blindguess\le L T^{1-\frac1d} + \frac{24 L}{ d} \log(T) T^{1-\frac{1}{d}} \le \frac{d+24}{d}L\log(T) T^{1-\frac{1}{d}}.
\]

\paragraph{\leaf} To analyze the last term, we recall that $T_{\ell,z}^\leaf$ is non-empty only if $\ell=H$ (\ie, it is a leaf node) and the node was already marked. Thus we can use \Cref{lem:afterRA} which shows that 
\[
R_{\ell,z}^\leaf\le 6L2^{-\log_{2^d}(T)}\mathbb{E}\left[T_{\ell,z}^\leaf\right]\le 6LT^{-1/d} \mathbb{E}\left[T_{\ell,z}\right].
\] 
Therefore, by restricting our attention to the $H$-th layer we get
\begin{align*}
\sum_{\ell=0}^{H}\sum_{z\in Z_\ell} R_{\ell,z}^\leaf & = \sum_{z\in Z_H} R_{H,z}^\leaf\\
&\le \sum_{z\in Z_H} 6LT^{-1/d} \mathbb{E}\left[T_{H,z}\right]\\
&=6LT^{\frac{d-1}{d}}.
\end{align*}

\paragraph{Combining all parts} Thus, the total regret can be bounded by 
\[
R_T\le \frac{72+7d}{d} L T^{1-\frac1d}\log(T).
\]
concluding the proof.
\end{proof}

\section{Omitted proofs from \Cref{sec:LB} (Lower Bound)}
\lemmauniform*
\begin{proof}
    Since we have full feedback, we know the sequence of valuations $\{(s_\tau,b_\tau)\}_{\tau\in[t-1]}$. The set of functions $\tilde f^h$ that agree on these valuations for the sequence of observed contexts is $2^{t-1}$, we call this set $\tilde {\mathcal{F}}_t$. Out of these functions, half will have $h_t=0$ and half will have $h_t=1$, and thus the conditioned probability on $(s_t,b_t)$ is uniform over $\{(s_0,b_0),(s_1,b_1)\}$.
\end{proof}

\section{Description of the Algorithm from \Cref{sec:noL} and Omitted Proofs}\label{app:algo2}

The algorithm's main loop is as the vanilla one introduced in \Cref{sec:UB}. In particular, we traverse the tree until we reach an active node (Line \ref{line:traverse2} of \Cref{alg:EPBT2}). Then we perform either the \geometricreducearea of \Cref{alg:reducearea2} or \geometricguess of \Cref{alg:guess2}. In particular, for a specific node, we continue to perform a step of the main loop of \Cref{alg:reducearea2} until it meets its termination condition, or one step of \Cref{alg:guess2}, until it terminates. \Cref{alg:reducearea2} simply tries sequentially all scales of $\tilde L^j$ as in the plain version. 
\Cref{alg:guess2}, on the other hand, at iteration $\tau$, uses the same largest scale permissible at that iteration Line \ref{line:largestscale} of \Cref{alg:guess2}, and then draws uniform prices from the associated grid (Line \ref{line:random}).

\begin{algorithm}[t]
\caption{Without knowing the Lipschitz costant}\label{alg:EPBT2}
\begin{algorithmic}[1]
\Require Horizon $T$, context dimension $d$.
\For{ $t\in [T]$}
\State Observe context $x_t$
\State Initialize $\ell\gets 0,z\gets \bm{0}_d$ %
\While{$p_{\ell,z}\neq \varnothing$ and $N_{\ell,z}$ not leaf}\hspace{-0.8cm}\hfill\Comment{Traverse tree until an active or leaf node is reached}\label{line:traverse2}
\State Find $z'$ such that $x_t\in \area(N_{\ell+1,z'})$, where $z'\in\child(\ell,z)$
\State $\ell\gets\ell+1$, $z\gets z'$
\EndWhile
\If{$\geometricreducearea(N_{\ell,z})$ has not terminated}
\State Execute one iteration of the main loop of $\geometricreducearea(N_{\ell,z})$%
\ElsIf{$\geometricguess(N_{\ell,z})$ has not terminated}
\State Execute one iteration of the main loop of $\geometricguess(N_{\ell,z})$%
\If{$\geometricguess(N_{\ell,z})$ terminates returning price $p$}
\State Set node $N_{\ell,z}$ as marked by setting $p_{\ell,z}\gets p$%
\EndIf
\EndIf
\State \algorithmicelse~Play $p_{\ell,z}$ \hfill\Comment{If node is a marked leaf play corresponding price}%
\EndFor
\end{algorithmic}
\end{algorithm}

\begin{algorithm}[t]
\caption{$\geometricreducearea(N_{\ell,z})$}\label{alg:reducearea2}
\begin{algorithmic}[1]
\State Set $\bar p\gets p_{\ell-1,\pred(\ell,z)}$ (\ie, the price associated to the parent node) \hspace{-0.8cm}\hfill\Comment{Initialization}
\State Set $\tilde L^j=L_02^j$ for $j\in\{0,\ldots,\bar j\}$ with $\bar j=\lceil\ell+\log_2(2L_0)\rceil$ and $L_0=1/2$
\For{$j\in\{0,\ldots,\bar j\}$} \hspace{-0.5cm}\hfill\Comment{Main Loop}
\State Set $p^j_L\gets\Pi_{[0,1]}\left(\bar p-\tilde L^j2^{-(\ell-1)}\right)$, $p^j_U\gets\Pi_{[0,1]}\left(\bar p+\tilde L^j2^{-(\ell-1)}\right)$
\State Initialize $p\gets p^j_L$ 
\Repeat
\State Post price $p$ and observe $\mathbb{I}(s_t\le p \le b_t)$
\If{$\mathbb{I}(s_t\le p \le b_t)=0$}
\State \algorithmicif~$p = p^j_L$ \algorithmicthen~set $p \gets p^j_U$; \algorithmicelse~terminate $p\gets\varnothing$%
\EndIf
\Until{$p=\varnothing$}
\EndFor
\end{algorithmic}
\end{algorithm}

\begin{algorithm}[t]
\caption{$\geometricguess(N_{\ell,z})$}\label{alg:guess2}
\begin{algorithmic}[1]
\State Let $\bar p=p_{\ell-1,\pred(\ell,z)}$ be the price associated to the parent node $N_{\ell-1,\pred(\ell,z)}$ \hspace{-1.1cm}\hfill\Comment{Initialization}
\State Set $\tilde L^j=L_02^j$ for $j\in\{0,\ldots,\bar j\}$ with $\bar j=\lceil\ell+\log_2(2L_0)\rceil$ and $L_0=1/2$
\State Set $\mathcal{P}_{\ell,z}^j=[\bar p-\tilde L^j2^{-(\ell-1)}, \bar p+\tilde L^j2^{-(\ell-1)}]_{\epsilon}$ with $\epsilon=T^{-\nicefrac{1}{d}}$
\State Set $\tau=0$
\Repeat\hspace{-0.5cm}\hfill\Comment{Main Loop}
\State $j^{(\tau)}:=\arg\max\big\{j\in\{0,\ldots,\bar j\}:|\mathcal{P}_{\ell,z}^j|\le \tau\big\}$\label{line:largestscale}
\State Post price $p\sim\text{Unif}(\mathcal{P}^{j^{(\tau)}}_{\ell,z})$ and observe $\mathbb{I}(s_t\le p \le b_t)$\label{line:random}
\State $\tau\gets\tau+1$
\Until{$\mathbb{I}(s_t\le p \le b_t)=1$}
\State Return $p$
\end{algorithmic}
\end{algorithm}

\begin{restatable}{lemma}{lemmatmpuno}\label{lem:tmp1}
    For every $\epsilon>0$ and for every node $N_{\ell,z}$, there exists an algorithm that does not require knowledge of $L$ such that \[R_{\ell,z}=O\left(\ell L2^{-\ell}+\epsilon\mathbb{E}[T_{\ell,z}]+\frac{L^22^{-2\ell}}{\epsilon}\right).\]
\end{restatable}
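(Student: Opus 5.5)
The algorithm is the node-level part of \Cref{alg:EPBT2}: \geometricreducearea followed by \geometricguess, with $\epsilon$ treated as a free parameter. I split $T_{\ell,z}$ into the rounds spent in \geometricreducearea, the rounds spent in \geometricguess, and the marked-leaf rounds, and bound each piece separately, mirroring \Cref{lem:regretRA}, \Cref{lem:regretBG} and the leaf part of \Cref{th:UB}. The scales are $\tilde L^j=2^{j-1}$, and $j^\star=\lceil\log_2(2L)\rceil$ is the first index with $\tilde L^{j}\ge L$.

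\emph{Step 1: a scale-free analogue of \Cref{lem:afterRA}.} Suppose the round of \geometricreducearea at some scale $j\ge j^\star$ has terminated. Then the argument of \Cref{lem:afterRA} goes through verbatim with $L$ replaced by $\tilde L^j$ in the prices, while true Lipschitzness is used only to move between contexts. The marked parent gives $f_s(x)\le p_U^j$ and $f_b(x)\ge p_L^j$ on the region. Hence the rejection at $p^j_U$ was caused by the buyer and the rejection at $p^j_L$ by the seller, so for every $x\in\area(N_{\ell,z})$
\[
f_b(x)-f_s(x)\le p^j_U-p^j_L+2L2^{-\ell}=O\big((\tilde L^j+L)2^{-\ell}\big).
\]
Since every scale must terminate, this applies to $j=j^\star$ and gives a $\GFT$ bound of $O(L2^{-\ell})$ once \geometricreducearea is over. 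It also applies to the parent, provided $\bar j$ at the parent is at least $j^\star$, which holds once $\ell\gtrsim\log_2 L$. Shallower levels contribute only $O(L)$-type constants, which I absorb.

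\emph{Step 2: regret of \geometricreducearea.} Only rejections cost regret, and there are at most $2(\bar j+1)=O(\ell)$ of them. Each costs at most the maximal $\GFT$ on the region. By Step 1 applied to the parent, that is $O(L2^{-(\ell-1)})$. This gives the $O(\ell L2^{-\ell})$ term.

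\emph{Step 3: regret of \geometricguess.} Again split the rounds into $\GFT\le\epsilon$ rounds, which contribute at most $\epsilon\,\mathbb{E}[|T_{\ell,z}|]$, and $\GFT>\epsilon$ rounds. Each $\GFT>\epsilon$ round costs at most $O(L2^{-\ell})$ by Step 1, so it remains to bound their expected number by $O(L2^{-\ell}/\epsilon)$. Let $\tau^\star$ be the first iteration at which $|\mathcal{P}^{j^\star}_{\ell,z}|\le\tau$, so that $\tau^\star\le|\mathcal{P}^{j^\star}_{\ell,z}|=O(L2^{-\ell}/\epsilon)$. From $\tau^\star$ on, the chosen scale $j^{(\tau)}\ge j^\star$ has an interval containing $[f_s(x_t),f_b(x_t)]$, by the Lipschitz argument of \Cref{lem:regretBG}. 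Any $\GFT>\epsilon$ round therefore terminates with probability at least $1/|\mathcal{P}^{j^{(\tau)}}_{\ell,z}|\ge 1/\tau$.

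This is the main obstacle. The success probability decays like $1/\tau$, so a naive geometric-trials argument fails. Two things save it. First, $\tau$ only advances during \geometricguess rounds. Second, the scale cannot exceed $\bar j$, so $|\mathcal{P}^{j}_{\ell,z}|$ is capped at $O(1/\epsilon)$. I would first try to show that the scale stays at most one doubling above $j^\star$ for the relevant window, or, failing that, charge the large-$\GFT$ rounds after $\tau^\star$ as follows. The probability of not having terminated after $k$ such rounds, with $\tau\le\tau^\star+\text{(all rounds)}$, is at most $\prod(1-1/\tau)$. Comparing this with $\tau^\star$ via the fact that $|\mathcal{P}^{j^{(\tau)}}|\le\tau$ grows at most linearly should give $\mathbb{E}[\#\{\GFT>\epsilon\}]=O(\tau^\star)=O(L2^{-\ell}/\epsilon)$, possibly with an extra $\log$ factor that fits within the final $\log^2 T$. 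Multiplying by $O(L2^{-\ell})$ yields $O(L^2 2^{-2\ell}/\epsilon)$.

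\emph{Step 4: leaf rounds.} These rounds happen only after \geometricreducearea has terminated, so by Step 1 each costs $O(L2^{-\ell})$. At the leaf level this is at most $O(LT^{-1/d})$ per round, which is bounded by the $\epsilon\,\mathbb{E}[T_{\ell,z}]$ term for $\epsilon\ge LT^{-1/d}$, or is otherwise summed exactly as in \Cref{th:UB}. Summing Steps 2–4 gives the claimed bound.
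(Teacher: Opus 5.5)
\textbf{The gap is in Step 3.} You never establish $\mathbb{E}[\#\{t:\GFT>\epsilon\}]=O(\tau^\star)$. For this algorithm that bound is actually false, so neither of your fallback routes can deliver it.

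The internal clock $\tau$ of \geometricguess advances on \emph{every} round at the node, including rounds with $\GFT\le\epsilon$. Consider the following oblivious adversary.
\begin{itemize}
\item It first feeds contexts in $\area(N_{\ell,z})$ with $f_b<f_s$. These are always rejected and cost zero regret. By Lipschitzness they can coexist in the region with contexts of $\GFT\in(\epsilon,2\epsilon)$ whenever $\epsilon<L2^{-\ell}$.
\item It continues until $\tau$ reaches the cap $|\mathcal{P}^{\bar j}_{\ell,z}|=\Theta(1/\epsilon)$, and only then sends contexts with $\GFT\in(\epsilon,2\epsilon)$.
\end{itemize}
Each of those late contexts is won with probability $\Theta(\epsilon)$, so you face $\Theta(1/\epsilon)$ of them in expectation. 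That is a factor $\Theta(2^\ell/L)$ more than $\tau^\star=\Theta(L2^{-\ell}/\epsilon)$, not a logarithmic factor. In particular, the scale does not stay within one doubling of $j^\star$. Moreover, ``count $\times\,O(L2^{-\ell})$'' yields only $O(L2^{-\ell}/\epsilon)$, which the lemma does not permit.

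There is a separate problem even if every round had $\GFT>\epsilon$. The product $\prod_{\tau^\star\le\tau\le n}(1-1/\tau)\approx\tau^\star/n$ is not summable, so it gives at best $O(\tau^\star\log(2^\ell/L))$. The paper removes this logarithm as follows. The grid size $|\mathcal{P}^{j^{(\tau)}}_{\ell,z}|$ stays at about $2^i\tau^\star$ on the whole dyadic block $[2^i\tau^\star,2^{i+1}\tau^\star)$. Hence each block is survived with probability at most $e^{-1}$ while block lengths only double, and $\sum_i(2/e)^i<\infty$ gives $O(\tau^\star)$ turns. Note, however, that the paper's Markov chain also treats every \geometricguess turn as a trial with $\GFT>\epsilon$. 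It therefore does not address the inflation of $\tau$ either.

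\textbf{The missing idea} is to charge each large-$\GFT$ round by its own $\GFT$, not by $O(L2^{-\ell})$, using that the winning probability is proportional to the $\GFT$.
\begin{itemize}
\item After \geometricreducearea, $f_b-f_s\le 10L2^{-\ell}$ on the region. The parent's marking gives $f_s\le\bar p+2L2^{-\ell}$ and $f_b\ge\bar p-2L2^{-\ell}$. Hence $[f_s(x),f_b(x)]\subseteq[\bar p-12L2^{-\ell},\bar p+12L2^{-\ell}]$. This interval lies inside the range of $\mathcal{P}^j_{\ell,z}$ once $\tilde L^j\ge 6L$, or once that range covers $[0,1]$.
\item Let $\tau^{\star\star}=O(L2^{-\ell}/\epsilon+1)$ be the first internal time at which $j^{(\tau)}$ reaches such a scale. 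The rounds before it cost $O(L2^{-\ell})$ each, which gives the $L^22^{-2\ell}/\epsilon$ term.
\item Take a later round with $g_t=f_b(x_t)-f_s(x_t)>\epsilon$. The grid has at least $\lfloor g_t/\epsilon\rfloor\ge g_t/(2\epsilon)$ winning points out of at most $\tau_t$. So its success probability $q_t$ satisfies $g_t\le 2\epsilon\tau_tq_t$.
\item Both $\tau_t$ and the event that round $t$ is such a round are determined by the past. Summing over these rounds gives
\[
\mathbb{E}\Big[\sum_t g_t\Big]\le 2\epsilon\,\mathbb{E}\Big[\sum_t\tau_t\,\mathbb{I}(\text{success at }t)\Big]\le 2\epsilon\,\mathbb{E}\big[|T_{\ell,z}|\big],
\]
because at most one success occurs.
\end{itemize}
This is absorbed in the $\epsilon\,\mathbb{E}[T_{\ell,z}]$ term, needs no Markov chain, and is immune to the inflation of $\tau$.

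Your Steps 1 and 2 are fine and match \Cref{claim:afterGRA}. Your check that the parent's $\bar j$ reaches $j^\star$ is a welcome extra that the paper omits. Step 4 is not within this lemma's bound for arbitrary $\epsilon$; the paper treats marked leaves separately in \Cref{lem:tmp2}.
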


\begin{proof}
    Define $\bar j=\lceil\ell+\log_2(2L_0)\rceil$, and  for each $j\in\{0,\ldots,\bar j\}$ define $\tilde L^j=L_02^{j}$. Note that $2^{-(\ell-1)}\tilde L^{\bar j}>1$.
    For each $j$ we now start \geometricreducearea on prices $p_L^j=\bar p-2^{-(\ell-1)}\tilde L^j$ and $p_U^j=\bar p+2^{-(\ell-1)}\tilde L^j$, when we get a rejection both for $p_L^j$ and $p_U^j$ we increase $j$. %

\begin{claim}\label{claim:afterGRA}
    Once the \geometricreducearea procedure is concluded, then we know that for all $x\in \area(N_{\ell,z})$ we have $f_b(x)-f_s(x)\le 10 L 2^{-\ell}$.
\end{claim}
\begin{proof}[Proof of \Cref{claim:afterGRA}]
    If the \geometricreducearea has concluded we know that we collected tuples $(x^j_L,p_L^j)_{j=0,\ldots,\bar j}$ and $(x^j_U,p_U^j)_{j=0,\ldots,\bar j}$ that were refused. 

    Now take any $x\in\area(N_{\ell,z})$, since price $(\bar x,\bar p)$ was accepted at $N_{\ell-1,\pred(\ell,z)}$ we have that
    \[
    f_s(x)\le f_s(\bar x)+L 2^{-(\ell-1)}\le \bar p+L 2^{-(\ell-1)}=\bar p+2L 2^{-\ell},
    \]
    and 
    \[
    f_b(x)\ge f_b(\bar x)-L 2^{-(\ell-1)}\ge \bar p-L 2^{-(\ell-1)}=\bar p-2L 2^{-\ell}.
    \]
    We take $j'$ such that $\tilde L_{\ell-1,z}^{j'}\ge L$. Then, by plugging in the above $x_L^{j'}$ and $x_U^{j'}$, similarly to the proof of \Cref{lem:afterRA}, we know that
    \[
    f_b(x)-f_s(x)\le f_b(x_U^{j'})-f_s(x_L^{j'})+2L2^{-\ell}\le p_U^{j'}-p_L^{j'}+2L2^{-\ell}\le 4 \tilde L_{\ell-1,z}2^{-\ell}+2L2^{-\ell},
    \]
    where the first inequality is by Lipschitzness of $f_s,f_b$. Moreover, note that $2L\ge \tilde L^{j'}$ and thus $f_b(x)-f_s(x)\le 10L2^{-\ell}$ for all $z\in\area(N_{\ell,z})$. This concludes the proof of \Cref{claim:afterGRA}.
\end{proof}
By performing \geometricreducearea at node $N_{\ell,z}$, we only suffer a regret of at most $2\bar j\le 2(\ell+\log_2(2L_0)+1)\le 4\ell$ if $\ell\ge 1$ and $L_0=1/2$ (the number of rejected trades), which multiplies the maximum regret suffered, which by \Cref{claim:afterGRA}, is at most $20L2^{-\ell}$, and thus a total regret of $80\ell L2^{-\ell}$.

After this phase, we start using \geometricguess in the following multi-scale fashion. Let $\tau$ be the ``internal time'' of the routine. At each execution $\tau$ of \geometricguess, we consider the largest integer $j^{(\tau)}$ so that $|\mathcal{P}_{\ell,z}^{j^{(\tau)}}|\le \tau$, where we defined $\mathcal{P}_{\ell,z}^j=[\bar p-\tilde L^j2^{-(\ell-1)},\bar p+\tilde L^j2^{-(\ell-1)}]_\epsilon$ and thus $|\mathcal{P}_{\ell,z}^{j}|=4 L_02^j2^{-\ell}/\epsilon$. Now, at each $\tau$ we draw uniformly at random a price from the corresponding uniform grid $\mathcal{P}_{\ell,z}^{j^{(\tau)}}$. Define $\tau^\star$ as the smallest integer so that $L_02^{j^{(\tau)}}\ge L$, and note that $\tau^\star\le \frac{8L2^{-\ell}}{\epsilon}$.

After $\tau^\star$ turns of \geometricguess, we will have that, for all valuations with $\GFT$ at least $\epsilon$, there is at least one price in $\mathcal{P}_{\ell,z}^{j^{(\tau)}}$ that wins it. We then consider blocks $\mathcal{B}_i$ defined as the intervals $\mathcal{B}_i=[2^i\tau^\star, 2^{i+1}\tau^\star)$.
The probability of winning a trade with $\GFT$ larger than $\epsilon$ for $\tau\in \mathcal{B}_i$ is $(2^i\tau)^{-1}$, since for $\tau\in \mathcal{B}_i$ the size of $\mathcal{P}_{\ell,z}^{j^{(\tau)}}$ remains unchanged for all $\tau\in\mathcal{B}_i$ and smaller than $2^i\tau^\star$. We can define an appropriate Markov chain to compute the expected number of times until we stop.

\begin{claim}\label{claim:markov}
    Consider a Markov chain with states $\{0\}\cup\{N,N+1,\ldots\}$ and with transition probabilities $p_s=\mathbb{P}(s+1|s)=1/(2^kN)$ and $\mathbb{P}(0|s)=1-p_i$ for all $k\in\mathbb{N}$ and  $s\in[2^kN,\ldots, 2^{k+1}N-1]$. Moreover, let $\mathbb{P}(0|0)=1$. Then, the expected hitting time of the state $0$ starting from state $N$ is at most $4N$.
\end{claim}

\begin{proof}[Proof of \Cref{claim:markov}]
    The probability of reaching $s=2^{k+1}N$ from $s=2^k$ is $(1-1/(2^iN))^{2^{i}N}\le e^{-1}$ and the probability of reaching $s=2^kN$ from $s=N$ is thus at most $e^{-k}$. The expected number of transitions before hitting $0$ is the sum of the expected number of transitions in block $\mathcal{B}_i$, conditioned on reaching $2^iN$. We call $X_i$ the number of transitions inside of block $i$ and $T$ the hitting time of $s=0$. Thus:
    \begin{align*}
    \mathbb{E}[T]=\sum_{i=0}^\infty\mathbb{E}[X_i]&=\sum_{i=0}^\infty\mathbb{E}[X_i|\text{reach } 2^iN]\mathbb{P}(\text{reach } 2^iN)\tag{$\mathbb{E}\left[X_i\Big\vert\overline{\text{reach } 2^iN}\right]=0$}\\
    &\le \sum_{i=0}^\infty 2^iN e^{-i}\tag{$\mathbb{P}(\text{reach } 2^iN)\le e^{-i}$ and $X_i\le 2^iN$}\\
    &= N\sum_{i=0}^\infty\left(\frac2e\right)^i
    \end{align*}
    which is equal to $N\frac{e}{e-2}\le 4N$ concluding the proof of \Cref{claim:markov}.
\end{proof}

We can then use \Cref{claim:markov} to conclude that the expected number of turns of \geometricguess before observing a success is at most $4\tau^\star$, and each turn we experience a regret of only $20L2^{-\ell}$ thanks to \Cref{claim:afterGRA}.

Thus, for each $\tau\le \tau^\star$ we only have a regret of $10L2^{-\ell}$ per turn, and after (for $\tau\ge \tau^\star$), we have a regret of at most 
\begin{align*}
    \epsilon\mathbb{E}[T_{\ell,z}]+80\tau^\star L2^{-\ell}\le \epsilon\mathbb{E}[T_{\ell,z}]+ \frac{640L^22^{-2\ell}}{\epsilon},
\end{align*}
as in the proof of \Cref{lem:regretBG}.

The total regret for marking a node is
\begin{align}
R_{\ell,z}&\le \underbrace{80\ell L2^{-\ell}}_{\geometricreducearea}+\underbrace{10L2^{-\ell}}_{\tau\le\tau^\star}+\underbrace{\epsilon\mathbb{E}[T_{\ell,z}]+ \frac{640L^22^{-2\ell}}{\epsilon}}_{\tau>\tau^\star}\\
&=O\left(\ell L2^{-\ell}+\epsilon\mathbb{E}[T_{\ell,z}]+\frac{L^22^{-2\ell}}{\epsilon}\right).
\end{align}
\end{proof}

\begin{restatable}{lemma}{lemmatmpdue}\label{lem:tmp2}
    For every marked leaf node $N_{\ell,z}$ the regret $R_{\ell,z}$ is at most $O(L2^{-\ell}\mathbb{E}[T_{\ell,z}])$.
\end{restatable}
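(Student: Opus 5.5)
The argument mirrors the \leaf part of the proof of \Cref{th:UB}, with \Cref{claim:afterGRA} playing the role of \Cref{lem:afterRA}.

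Fix a marked leaf node $N_{\ell,z}$ and consider only the rounds in which the traversal stops there after the node has been marked. By construction of \Cref{alg:EPBT2}, the node can be marked only after $\geometricreducearea(N_{\ell,z})$ has terminated and $\geometricguess(N_{\ell,z})$ has returned a price. Hence \Cref{claim:afterGRA} applies to the node: for every $x\in\area(N_{\ell,z})$ we have $f_b(x)-f_s(x)\le 10L2^{-\ell}$. This uses only the refused pairs $(x^j_L,p^j_L)$ and $(x^j_U,p^j_U)$ at the scale $j'$ with $\tilde L^{j'}\ge L$, together with the accepted parent price $\bar p$. The constant $L$ enters only the analysis, never the algorithm, so the argument is valid without knowing $L$.

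The regret in any such round $t$ is then bounded by the benchmark term alone:
\[
[f_b(x_t)-f_s(x_t)]^+-\GFT(p_{\ell,z}\mid f_s(x_t),f_b(x_t))\le [f_b(x_t)-f_s(x_t)]^+\le 10L2^{-\ell},
\]
since $\GFT\ge 0$. Summing over these rounds and taking expectations gives at most $10L2^{-\ell}\,\mathbb{E}[|T_{\ell,z}|]$.

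This bounds only the marked-leaf phase; it is the quantity $R^{\leaf}_{\ell,z}$ that enters the final sum. The rounds of $N_{\ell,z}$ spent in \geometricreducearea and \geometricguess are already covered by \Cref{lem:tmp1}. If the statement is meant to bound the full $R_{\ell,z}$, I would add those contributions separately rather than fold them into this bound.

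The main obstacle is confirming that \Cref{claim:afterGRA} holds for every leaf, including $\ell$ small and the case where the projection onto $[0,1]$ clips $p^j_L$ or $p^j_U$. The key point is that the scale $j'$ with $L\le\tilde L^{j'}\le 2L$ lies in $\{0,\ldots,\bar j\}$. If instead $L$ is so large that no such $j'$ exists, the clipped prices are $0$ and $1$; any rejection would then contradict feasibility, and the trivial bound $f_b-f_s\le 1\le L2^{-\ell}$ applies up to constants.
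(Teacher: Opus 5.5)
Your proof is correct and uses the paper's idea: bound the per-round gain from trade via \Cref{claim:afterGRA}, then sum over rounds. The one difference is which node the claim is applied to.

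The paper applies it to the \emph{parent} $N_{\ell-1,\pred(\ell,z)}$. The parent's \geometricreducearea has concluded because the parent is marked, and this gives $f_b-f_s\le 10L2^{-(\ell-1)}=20L2^{-\ell}$ on $\area(N_{\ell,z})\subseteq\area(N_{\ell-1,\pred(\ell,z)})$. You apply it to the leaf itself. That is equally valid, since a marked node has finished its own \geometricreducearea, and it gives the constant $10$ instead of $20$, exactly as in the \leaf step of \Cref{th:UB}.

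What the parent route buys is that it depends only on the parent being marked. So the bound $20L2^{-\ell}$ holds in \emph{every} round of $T_{\ell,z}$, including the rounds the leaf spent in its own \geometricreducearea and \geometricguess. It therefore proves the statement for the full $R_{\ell,z}$ in one line. Your concern about having to add the contributions of \Cref{lem:tmp1} separately disappears. As you correctly observe, though, the paper's proof and its use in the final theorem concern only $R^{\leaf}_{\ell,z}$.

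Two minor points in your last paragraph. First, the sentence that a rejection at the clipped prices $0$ and $1$ would contradict feasibility is false: price $0$ is rejected whenever $f_s(x)>0$, and price $1$ whenever $f_b(x)<1$. The case $L>\tilde L^{\bar j}=2^{\ell-1}$ is handled by your trivial bound $f_b-f_s\le 1<2L2^{-\ell}$, and that alone is enough.

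Second, a scale with $L\le\tilde L^{j'}\le 2L$ does not exist when $L<1/4$, because the smallest scale is $\tilde L^0=1/2$. In that range the argument only gives $f_b-f_s\le (2+2L)2^{-\ell}$. The result therefore implicitly assumes $L$ is bounded below by a constant; the paper's proof of \Cref{claim:afterGRA} makes the same assumption.
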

\begin{proof}
    Since we reached a leaf node, we must have concluded \geometricreducearea in the parent node and thus $R^\leaf_{\ell,z}\le 20L2^{-\ell}\mathbb{E}[T_{\ell,z}]$ by \Cref{claim:afterGRA}.
\end{proof}

\theoremWL*
\begin{proof}
    The regret of every non-marked node is upper bounded by \Cref{lem:tmp1}, while the leaves once marked have a regret upper bounded by \Cref{lem:tmp2}.
    Thus
    \begin{align*}    
    R_T&\le \sum_{\ell=0}^H\sum_{z\in Z_\ell} O\left(\ell L2^{-\ell}+\epsilon\mathbb{E}[T_{\ell,z}]+\frac{L^22^{-2\ell}}{\epsilon}\right)+\sum_{z\in Z_H}O\left(L2^{-H}\mathbb{E}[T_{z,H}]\right)\\
    &\le L\sum_{\ell=0}^HO(\ell2^{-\ell}2^{d\ell})+O(\epsilon T)+\frac{L^2}{\epsilon}\sum_{\ell=0}^H O(2^{d\ell}2^{-2\ell})+O(LT2^{-H})\\
    &\le O(LH^2T^{(d-1)/d})+O(\epsilon T)+O\left(\frac{L^2}{\epsilon}HT^{(d-2)/d}\right)+O(LT^{(d-1)/d}).
    \end{align*}
    then, by choosing $\epsilon=T^{-1/d}$, and noting that $H\le \log_2(T)/d$ we obtain that
    \[
    R_T\le O(\log_2(T)^2L^2 T^{(d-1)/d}),
    \]
    concluding the proof.
\end{proof}

\section{Experimental Evaluation}

For completeness, we provide an empirical validation on synthetic data to confirm our theoretical findings. 
\begin{wrapfigure}{l}{0.36\textwidth}
  \centering
  \includegraphics[width=\linewidth]{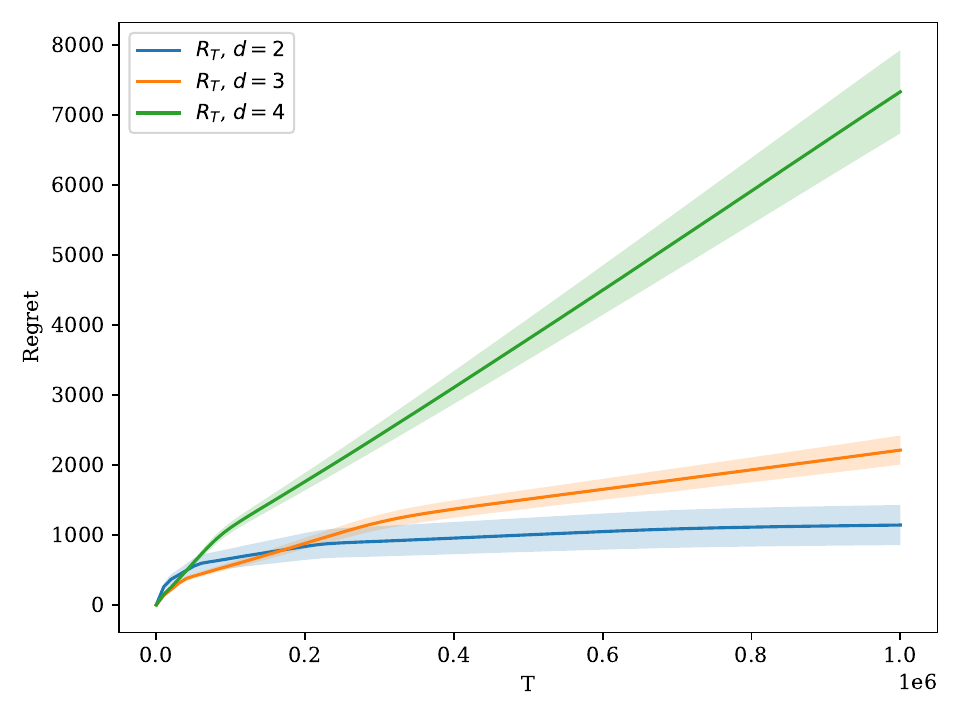}
  \caption{Average regret (together with 95\% confidence intervals) in the \emph{quadratic setting}.}\label{fig:exp}
\end{wrapfigure}
We consider a setting in which valuations are quadratic functions of the context. We run our algorithm on 30 independent instances built as follows. 
For each run we start by uniformly sampling two matrices $A,B\in [-1,1]^{d\times d}$. Then, for each $t$ we sample uniformly at random in $[0,1]^d$ a $d$-dimensional context $x_t$. We compute valuations as $x_t^\top A x_t$ and $x_t^\top B x_t$ up to normalization to have final valuations belong to the unit square. In particular, the valuations are computed as
\[
f_s(x_t)=\frac{x_t^\top A x_t + d^2}{d^2} \qquad f_b(x_t)=\frac{x_t^\top B x_t + d^2}{d^2}.
\]
We test our algorithms on $d=\{2,3,4\}$, $T= 10^6$, and run $30$ repetitions for each choice of $d$. The results are reported in \Cref{fig:exp}, and essentially confirm our theoretical findings. As expected, the learning problem becomes harder as the context dimension $d$ increases, and the behavior of average regrets is consistent with the rates obtained in \Cref{th:UB}.

\begin{figure}[!htbp]
  \centering
  \includegraphics[width=0.32\textwidth]{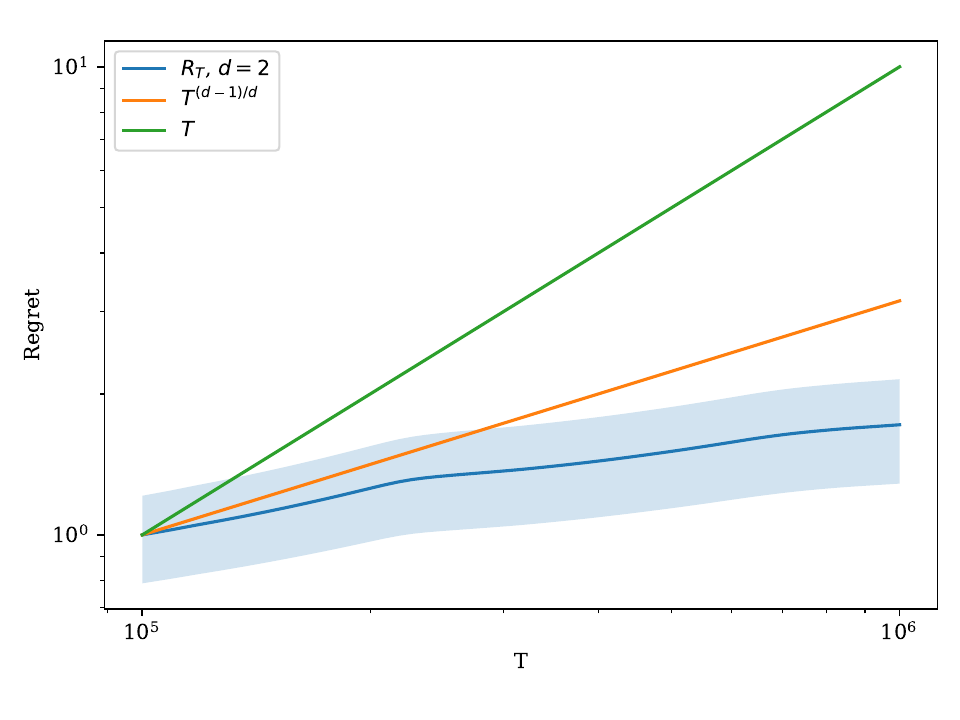}%
  \hfill
  \includegraphics[width=0.32\textwidth]{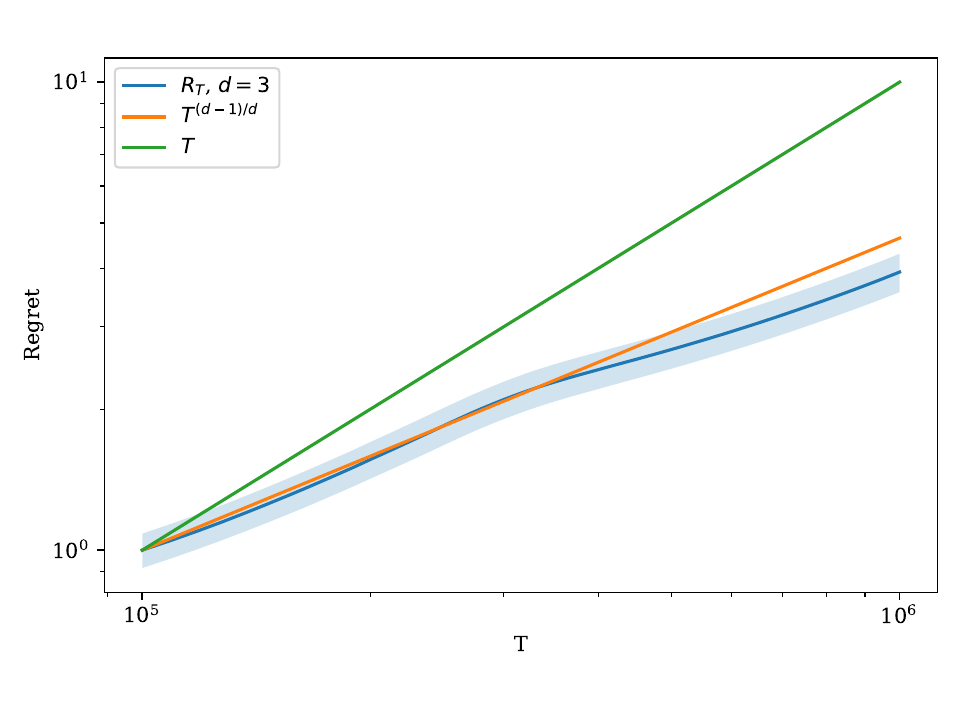}%
  \hfill
  \includegraphics[width=0.32\textwidth]{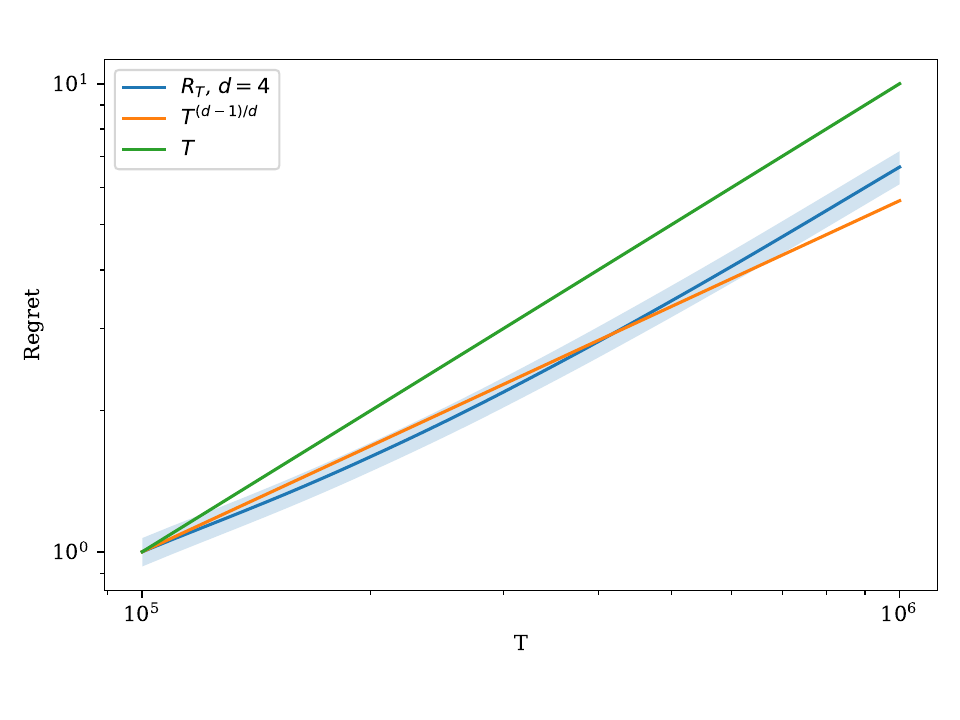}
  \caption{Comparison between average regret and theoretical bounds.}
  \label{fig:rates}
\end{figure}

In \Cref{fig:rates} we report the logarithmic plot of the regret. In particular, for the last $10^4$ steps, we plot $\log(t)$ vs $\log(R_t)$ for $d\in\{2,3,4\}$, along with the slopes associated with the theoretical bounds (i.e., $t^{(d-1)/d}$) and the linear slope $t$ (we normalize the lines so that they all start from $1$ so that we can compare more easily their slopes). We can see that the regret slope matches the theoretical guarantees. In particular, the regret grows at most as fast as $O(t^{(d-1)/d})$ for all $d$ (and for $d=2$ it appears to be growing even at a better rate).

\end{document}